\newtheorem{thm}{Theorem}
\newtheorem{lm}{Lemma}
\newcommand{\remove}[1]{#1}
\newcommand{\add}[1]{{\iffalse #1 \fi}} 
\title{\LARGE \bf
Simulation and Real-World Evaluation of Attack Detection Schemes
}
\author{Matthew Porter$^{1}$, Arnav Joshi$^{1}$, Pedro Hespanhol$^{2}$,\\ Anil Aswani$^{2}$, Matthew Johnson-Roberson$^{3}$, and Ram Vasudevan$^{1}$
\thanks{*This work was supported by a grant from Ford Motor Company via the Ford-UM Alliance under award N022977.}
\thanks{$^{1}$Matthew Porter, Arnav Joshi, and Ram Vasudevan are with the Department of Mechanical Engineering,
        University of Michigan, Ann Arbor, MI 48103, USA 
        {\tt\small \{matthepo,arnavj,ramv\}@umich.edu}.}%
\thanks{$^{2}$Pedro Hespanhol and Anil Aswani are with the Department of Industrial Engineering and Operations Research, 
        University of California Berkeley, Berkeley, CA 94720, USA 
        {\tt\small \{pedrohespanhol,aaswani\}@berkeley.edu}.}%
\thanks{$^{3}$Matthew Johnson-Roberson is with the Department of Naval Architecture, 
        University of Michigan, Ann Arbor, MI 48103, USA 
        {\tt\small mattjr@umich.edu}.}%
}
\begin{document}
\maketitle
\thispagestyle{empty}
\pagestyle{plain}

\begin{abstract}
A variety of anomaly detection schemes have been proposed to detect malicious attacks to Cyber-Physical Systems.
Among these schemes, Dynamic Watermarking methods have been proven highly effective at detecting a wide range of attacks. 
Unfortunately, in contrast to other anomaly detectors, no method has been presented to design a Dynamic Watermarking detector to achieve a user-specified false alarm rate, or subsequently evaluate the capabilities of an attacker under such a selection. 
This paper describes methods to measure the capability of an attacker, to numerically approximate this metric, and to design a Dynamic Watermarking detector that can achieve a user-specified rate of false alarms. 
The performance of the Dynamic Watermarking detector is compared to three classical anomaly detectors in simulation and on a real-world platform.
These experiments illustrate that the attack capability under the Dynamic Watermarking detector is comparable to those of classic anomaly detectors.
Importantly, these experiments also make clear that the Dynamic Watermarking detector is consistently able to detect attacks that the other class of detectors are unable to identify.
\end{abstract}

\section{Introduction}
Cyber-Physical Systems (CPS) have proven difficult to secure due to the ever-present risk of malicious attacks.
Failure to detect such attacks can have catastrophic consequences \cite{Abrams2008,Langner2011, Lee2016AnalysisGrid}. 
Though these real-world attacks, such as the Stuxnet Worm or the attack on the Ukranian Power Grid, highlight the threat to existing industrial facilities and public utilities, researchers speculate that attacks on next generation transportation systems could be even more dangerous due to the reliance on communication between infrastructure and privately owned vehicles \cite{Amoozadeh2015,Dominic2016,Chen2018ExposingControl}.

To counter the growing risks of attacks on CPS, researchers have attempted to develop techniques to detect attacks while they are being conducted.
Rather than address all possible attacks, researchers have focused on detecting additive attacks in which an attacker alters a measurement that is used while performing feedback control.
To identify these attacks at run-time, detection schemes compute the residual between a state observer and measurements and then analyze this residual signal to determine whether an attack is taking place.
Three detectors, which were originally proposed for anomaly detection in quality control applications \cite{Hotelling1947MultivariateControl}, analyze the residual signal:
the $\chi^2$, cumulative sum (CUSUM), and multivariate exponentially weighted moving average (MEWMA) detectors \cite{Mo2010FalseSystems,Murguia2016CUSUMSensors}.

\begin{figure}[t]
    \centering
    \includegraphics[trim={0.2in 0.5in 0.2in 0.05in},clip,width=0.4\textwidth]{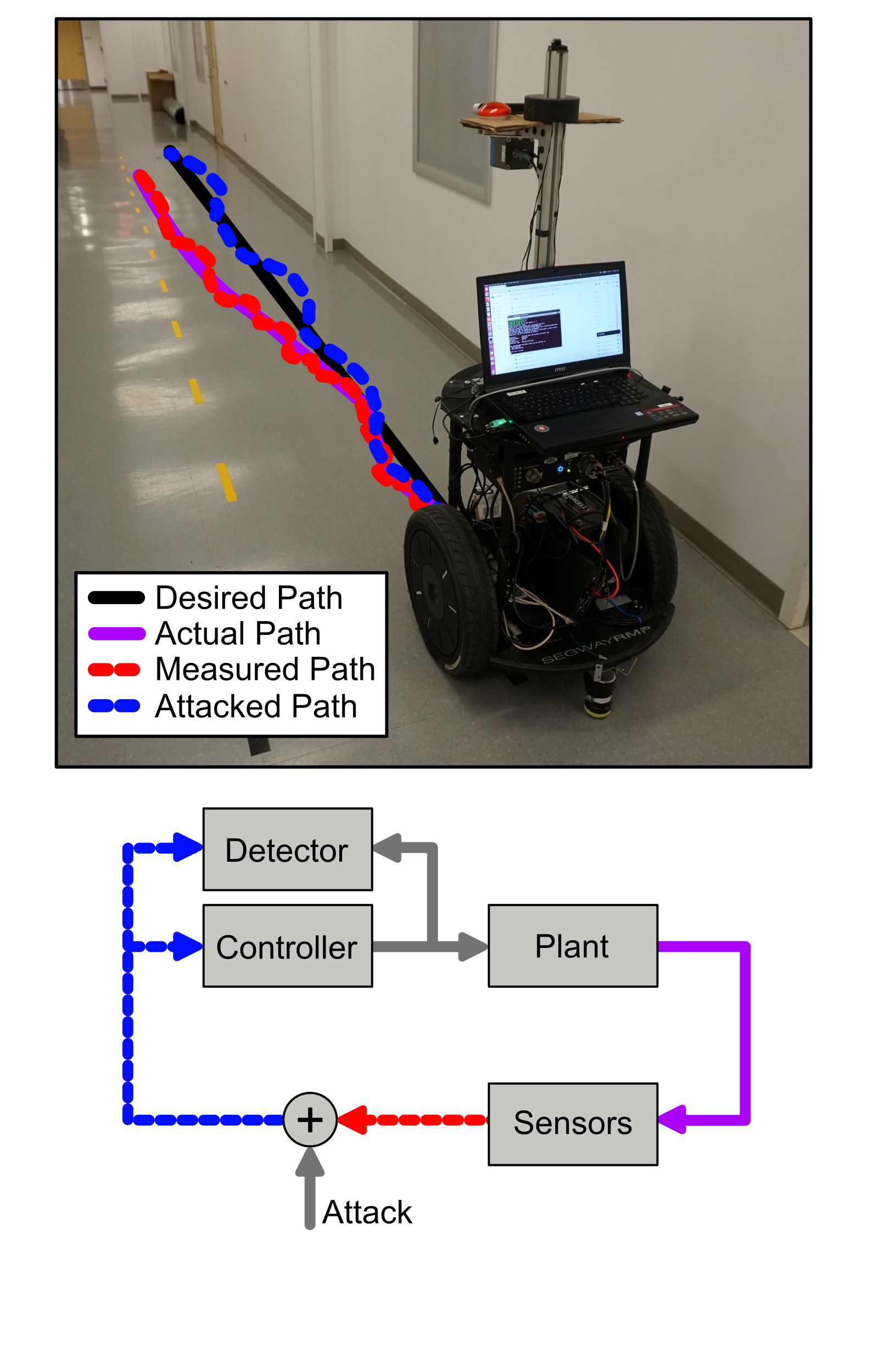}
    \caption{This paper describes a real-world implementation of malicious attack detection on a Segway Robotics Mobility Platform performing a path-following task (top).
    An additive attack is applied to this system as shown in the block diagram (bottom).
    Four different detectors ($\chi^2$, CUSUM, MEWMA, and the presented Dynamic Watermarking method) are applied in the ``Detector'' block.
    The Dynamic Watermarking method is shown to detect attacks that the others cannot, without sacrificing tracking performance relative to the desired path.
    Video of this example can be found at \cite{Porter_vid_2018}.}
    \label{fig:overview}
    \vspace*{-1.25em}
\end{figure}

To detect an attack, each of the detectors evaluates a test statistic that is a function of the residual.
If this statistics value rises above some user-specified threshold, then the detector triggers an alarm.
To evaluate and design the threshold for these detectors, researchers have proposed the following three metrics: first, the \emph{attack capability} or the amount of perturbation to the state of the system that an attack can induce without either inducing an alarm \cite{Murguia2016CUSUMSensors} or without increasing the rate of alarms \cite{Murguia2018OnAttacks,Mo2016}; second, the rate of false alarms ($R_{FA}$) given by the detector when no attack is occurring; and third, the ability of the detector to reliably detect specific attack models. 
For an open-loop stable system, the attack capability can be evaluated by computing the reachable set of the error in the observed state. 
Since computing this reachable set can be challenging, researchers have instead attempted to evaluate surrogates for the attack capability such as the expected value of the state vector \cite{Murguia2016CUSUMSensors} or the norm of the largest time invariant residual \cite{Umsonst2018}.
However, these surrogates are unable to accurately characterize the attack capability of attacks that have large residuals for short amounts of time. 
Note that, by reducing the threshold in any detector, one can reduce the attack capability; however, this can increase the rate of false alarms.
To compute this false alarm rate for classic anomaly detectors, it is typically assumed that the residuals are independent \cite{Murguia2016CUSUMSensors,Umsonst2018}. 
Unfortunately, simulated experiments have noted that this assumption can lead to a consistent error while computing the false alarm rate \cite{Murguia2016CUSUMSensors}. 
Researchers have also shown that a $\chi^2$ detector is always capable of identifying a specific type of additive attack where a measured signal is replayed \cite{Mo2009}; however, the ability to detect specific attacks has been less studied for the other detectors. 

More recently, researchers have begun exploring techniques to detect more sophisticated attacks, which exploit knowledge of the system dynamics.
These detectors rely on \emph{Dynamic Watermarking} wherein an excitation signal that is only known to the control system is introduced into the input.
The detector then evaluates the covariance of the residual signal with the watermark to determine whether the system is under attack.
These Dynamic Watermarking based detectors are theoretically proven to detect attacks that exploit knowledge of the system dynamics.
Initially, Dynamic Watermarking was developed for LTI systems with full rank input matrices and full state observations. 
These methods were proven to detect arbitrary attack models \cite{Satchidanandan2017} including attacks that replayed a measured signal \cite{Weerakkody2014}. 
These methods were later extended to generalized LTI systems \cite{Hespanhol2017},
and to networked control systems \cite{Hespanhol2018}.

Though Dynamic Watermarking is proven to be capable of detecting a larger class of attacks when compared to prior detection algorithms, to the best of our knowledge, no one has conducted a real-world evaluation of any of these attack detection algorithms. 
Moreover, no one has evaluated the attack capability of a system employing a Dynamic Watermarking scheme as a function of false alarm rate or developed a technique to design a detector using Dynamic Watermarking that achieves a user-specified false alarm rate. 

The contributions of this paper are three-fold.
First, in Section \ref{sec:reach}, we construct a metric for quantifying the attack capability based on the reachable set of the portion of the observer error related to the residual without triggering an alarm, and we develop a technique to compute an outer approximation of this reachable set for the $\chi^2$, CUSUM, MEWMA, and Dynamic Watermarking detectors.
Second, in Section \ref{sec:false}, we develop an empirical method to design a Dynamic Watermarking detector that achieves a user-specified false alarm rate.
Third, in Section \ref{sec:simp}, the attack capability for each detector is compared, and in Section \ref{sec:exp} a real world example, shown in Figure \ref{fig:overview}, is presented to evaluate the performance of the $\chi^2$, CUSUM, MEWMA, and Dynamic Watermarking detectors for specific attacks. 
See \cite{Porter_vid_2018} for a video of this real-world test.
The rest of this document is outlined as follows.
In Section \ref{sec:prelim}, we define the notation used in the paper, along with the LTI system model and assumptions.
In Section \ref{sec:alg} we provide notation for each of the detectors.
We draw conclusions in Section \ref{sec:con}.

\section{Preliminaries}
\label{sec:prelim}

This section describes the notation and assumptions used throughout the paper for the LTI system model. 

\subsection{Notation}
\label{subsec:notation}
This paper uses three different probability distributions:
the multivariate Gaussian distribution, denoted $\mathcal{N}(\mu,\Sigma)$, with mean $\mu$ and covariance $\Sigma$;
the $\chi^2$ distribution, denoted $\chi^2(i)$, with $i$ degrees of freedom; and the the Wishart distribution, denoted $\mathcal{W}(\Sigma,i)$, with scale matrix $\Sigma$ and $i$ degrees of freedom \cite[Section 7.2]{Anderson2003AnAnalysis}.
The Euclidean Norm of a vector $X \in \mathbb{R}^n$ is denoted $\|X\|$. 
Similarly, the induced operator norm for a matrix $X \in \mathbb{R}^{n \times m}$  is denoted $\|X\|$. 
The expectation of a variable $X$ is denoted $\mathds{E}[X]$.
Zero matrices of dimension $i\times j$ are denoted $0_{i\times j}$, and in the case that $i=j$ the notation is simplified to $0_i$. 
The identity matrix of dimension $i$ is denoted $I_i$. 
The closed unit ball of radius $\epsilon$ is denoted $\mathcal{B}_\epsilon$. 
The Minkowsi sum is denoted $\oplus$. 
The minimum singular value of a matrix $X \in \mathbb{R}^{n\times m}$ is denoted $s_1(X)$. 

\subsection{LTI Model}
This paper considers a discrete linear time invariant systems where the discrete time step is indexed by $n \in \mathbb{N}$:
\begin{align}
    x_{n+1} &= Ax_n+Bu_n+w_n\\
    y_n &= Cx_n+z_n+v_n
\end{align}
with state $x_n\in\mathbb{R}^p$, measurement $y_n\in\mathbb{R}^q$, and input $u_n\in\mathbb{R}^m$. 
The process noise $w_n\in\mathbb{R}^p$ and the measurement noise $z_n\in\mathbb{R}^q$ are assumed to be Gaussian with $w_n\sim\mathcal{N}(0,\Sigma_w)$ and $z_n\sim\mathcal{N}(0,\Sigma_z)$. 
At each time step, the attacker adds $v_n\in\mathbb{R}^q$ to the measurement.
For each $n \in \mathbb{N}$ an observer recovers the full observed state $\hat{x}_n$ from the measurements:  
\begin{align}
    \hat{x}_{n+1}=&(A+LC)\hat{x}_n+Bu_n-Ly_n.
\end{align}
The observed state is then used in full state feedback:
\begin{align}
    u_n=&K\hat{x}_n+e_n
\end{align}
where $e_n\sim\mathcal{N}(0,\Sigma_e)$ is a private watermark and is only added when running the Dynamic Watermarking detector. 
The controller gain matrix $K$ and the observer gain matrix $L$ are chosen such that the closed loop matrices $(A+BK)$ and $(A+LC)$ are Schur Stable.
For each $n \in \mathbb{N}$, the observer error $\delta_{n}=\hat{x}_n-x_n$ then has the following update equation:
\begin{equation}
    \delta_{n+1}=(A+LC)\delta_n-w_n-Lz_n-Lv_n.\label{eq:observe_error}
\end{equation}
Denote the residual as $r_n$ which is defined as:
\begin{equation}
    r_n=C\hat{x}_n-y_n=C\delta_n-z_n-v_n.\label{eq:res}
\end{equation}
When the system is not being attacked, i.e. $v_n=0,~\forall n\in \mathbb{N}$, the steady state covariance for the observer error $\Sigma_\delta=\lim_{n\rightarrow \infty}\mathds{E}[\delta_n\delta_n^T]$ can be found by a Discrete Lyapunov Equation as described in \cite[Section 2.2.2]{Lewis2007}:
\begin{align}
    \Sigma_\delta=&(A+LC)\Sigma_\delta(A+LC)^T+\Sigma_w+L\Sigma_zL^T.\label{eq:lyap}
\end{align}
We can then use the steady state covariance of the observer error to find the steady state covariance of the residual $\Sigma_r$ when the system is not being attacked:
\begin{align}
    \Sigma_r&=\lim_{n\rightarrow \infty}\mathds{E}[r_nr_n^T]\\
    &=\lim_{n\rightarrow \infty}\mathds{E}[C\delta_n\delta_n^TC^T-C\delta_nz_n^T-z_n\delta_n^TC^T+z_nz_n^T]\label{eq:cov_cross}\\
    &=C\Sigma_\delta C^T+\Sigma_z,
\end{align}
where the expectation of the cross terms in \eqref{eq:cov_cross} are zero due to causality.
Using the steady state covariance of the residuals, define the normalized residual $\bar{r}_n$:
\begin{align}
    \bar{r}_n=\Sigma_r^{-1/2}r_n.\label{eq:norm_res}
\end{align}
Furthermore, we define the vector of current and previous normalized residuals:
\begin{align}
R_n&=[\bar{r}_n^T\hdots \bar{r}_0^T]^T.\label{eq:R}
\end{align}

\section{Detection Algorithms}
\label{sec:alg}

This section describes the technical details of each of the detection algorithms explored in this paper. 
At each time step, each detector computes a test statistic, $a_{*}(R_n)$, based on current and previous residuals. 
The subscript $*$ is a placeholder for the detector designation, and $n$ is the discrete time step. 
If, for a given detector, the test statistic exceeds a predefined threshold, then the detector raises an alarm. 
These thresholds are denoted by $\tau_*$.

\subsection{$\chi^2$ Detector}
The $\chi^2$ detector uses the normalized residual $\bar{r}_n$ to develop a statistical test. 
Since, under the assumption of no attack, $\bar{r}_n\sim\mathcal{N}(0,I)$, the \emph{$\chi^2$ detector} is defined as:
\begin{align}
a_{\chi^2}(R_n)=\bar{r}_n^T\bar{r}_n<\tau_{\chi^2},
\end{align}
where the test statistic $a_{\chi^2}(R_n)\sim\chi^2(q)$ when the system is not under attack. 
A change to the distribution of the residual, as a result of an attack, may change the resulting distribution of the $\chi^2$ test value, but this is not true for all attacks. 
For instance an attack could replace the residual vectors with:
\begin{align}
    r^\prime_n=\Sigma_r^{1/2}\bar{r}_n^\prime=\Sigma_r^{1/2}\begin{bmatrix}\sqrt{c_n}&0&\hdots&0\end{bmatrix}^T
\end{align}
where $c_n\sim\chi^2(q)$.
Then $\bar{r}_n^{\prime T}\bar{r}_n^\prime=c_n\sim\chi^2(q)$.
Similarly some attacks, such as the one described in Section \ref{sec:exp}, can generate a false set of residuals that have the same distribution as the residual when no attack is taking place. 
Such attacks would be indistinguishable by the $\chi^2$ detector while increasing the error in the observed state.
Furthermore, the $\chi^2$ detector is memoryless, which can make detecting small increases in the norm of the residuals difficult.

\subsection{CUSUM Detector}
The CUSUM detector addresses the difficulty in detecting small but persistent increases in the norm of the normalized residual by introducing dynamics to its test statistic:
\begin{align}
a_{C}(R_n)=\max(a_{C}(R_{n-1})+\bar{r}_n^T\bar{r}_n-\gamma,0)<\tau_C,
\end{align}
where $a_{C}(R_{-1})=0$, and $\gamma$ is a parameter called the \emph{forgetting factor}. 
To ensure that the test statistic is stable, $\gamma>q$ where $q$ is the dimension of the residual \cite[Theorem 1]{Murguia2016CUSUMSensors}.
Similar to the $\chi^2$ detector, the CUSUM detector bounds the norm of the normalized residual under the assumption of no alarms:
\begin{align}
    \bar{r}_n^T\bar{r}_n-\gamma\leq a_{C}(R_{n-1})+\bar{r}_n^T\bar{r}_n-\gamma<\tau_C.
\end{align}
For the $CUSUM$ detector, a persistent increase in the norm of the normalized residual, increases the likelihood that  $\bar{r}_n^T\bar{r}_n^T>\gamma$.  
When this is true for several steps, the CUSUM test value increases cumulatively, triggering an alarm.

\subsection{MEWMA Detector}
The MEWMA detector test statistic also incorporates dynamics. 
The MEWMA detector uses the exponentially weighted moving average of the normalized residual:
\begin{align}
    G_{n}=&\beta\bar{r}_n+(1-\beta)G_{n-1}
\end{align}
where $G_{-1}=0$ and the parameter $\beta\in(0,1]$ is also called the \emph{forgetting factor}.
The MEWMA detector is then defined as:
\begin{align}
    a_{M}(R_n)=&\frac{2-\beta}{\beta}G_{n}^TG_n<\tau_M.
\end{align}
When $\beta=1$ the test statistic is equal to the $\chi^2$ detector's test statistic.
For smaller $\beta$, one gets a similar effect to that of the CUSUM detector, because, for a forgetting factor $\beta\in(0,1)$, a persistent increase in the norm of the residual results in a larger value of $G$ which results in a higher test statistic value.
However, the MEWMA test statistic does not increase for all persistent changes.
For instance, if the covariance of the residuals under attack are $\alpha\Sigma_r$ for some $\alpha\in(0,1)$, we expect a lower test value.

\subsection{Dynamic Watermarking}

Dynamic Watermarking is designed to address the shortcomings of the previous detection algorithms by sounding an alarm not only for changes in the distribution of the norm of the normalized residual (as the $\chi^2$ and CUSUM detectors are able to do), but also for persistent changes in the distribution of the normalized residual as the MEWMA detector is able to do. 
To illustrate this, we now briefly summarize the results of \cite{Hespanhol2017}. 
In its statistical limit form, Dynamic Watermarking was developed to detect any persistent change made to the residual:

\begin{thm}\label{thm:asymp_attack_conv_to_0_hespanhol_thm1}\cite[Theorem 1] {Hespanhol2017}
Suppose (A,B) is stabilizable, (A,C) is detectable, $\Sigma_e$ is full rank,  and: 
\begin{align}
    k^\prime=\min\{k\geq0~|~C(A+BK)^kB\neq0\}\label{eq:k}.
\end{align} 
If: 
\begin{align}
    \underset{N\rightarrow\infty}{aslim}\frac{1}{N}\sum_{i=0}^{N-1}r_nr_n^T&=\Sigma_r\quad \text{and} \label{eq:cond1}\\
    \underset{N\rightarrow\infty}{aslim}\frac{1}{N}\sum_{i=0}^{N-1}r_ne_{n-k}^T&=0,\label{eq:cond2}
\end{align}
then the asymptotic attack power, defined as:
\begin{align}
    \underset{N\rightarrow\infty}{\lim}\frac{1}{N}\sum_{i=0}^{N-1}v_n^Tv_n
\end{align}
must converge to 0. 
\end{thm}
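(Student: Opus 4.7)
The plan is to exploit the two asymptotic conditions to pin down the attack sequence, using the fact that the private watermark $e_n$ acts as a probe whose correlations with the residual the attacker cannot reproduce. I begin by linearly decomposing the observer error: write $\delta_n = \delta_n^0 + \delta_n^v$, where $\delta_n^0$ evolves by \eqref{eq:observe_error} with $v \equiv 0$ and depends only on the history of $(w_m,z_m)$, while $\delta_{n+1}^v = (A+LC)\delta_n^v - Lv_n$ is driven purely by the attack. The residual then splits as $r_n = r_n^0 + (C\delta_n^v - v_n)$, where $r_n^0 = C\delta_n^0 - z_n$ has the nominal limiting covariance $\Sigma_r$ given by \eqref{eq:lyap} and is independent of every $e_m$ by the mutual independence of the watermark from $w$ and $z$.

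Next, I unroll the true closed-loop dynamics $x_{n+1} = (A+BK)x_n + BK\delta_n + Be_n + w_n$ to track how the watermark reaches the measurement. Iterating while isolating the watermark-driven part, the $e$-contribution to $Cx_n$ is $\sum_{m<n} C(A+BK)^{n-1-m}B\, e_m$, and by the definition of $k'$ in \eqref{eq:k} the first nonzero term appears at $m = n-1-k'$ with coefficient $C(A+BK)^{k'}B$. Substituting the decomposition of $r_n$ into \eqref{eq:cond2} and canceling the $r_n^0$ piece (whose cross-moment with $e$ vanishes by independence), the constraint collapses to $\tfrac{1}{N}\sum (C\delta_n^v - v_n)\, e_{n-k}^T \to 0$; applying this at the critical delay and invoking the full-rank assumption on $\Sigma_e$ together with the nonzeroness of $C(A+BK)^{k'}B$ pins down the watermark-correlated component of the attack-induced residual to zero in the limit.

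Finally, I combine this with the covariance condition \eqref{eq:cond1}. Since $r_n^0$ already achieves the nominal second moment $\Sigma_r$, any persistent contribution from $(C\delta_n^v - v_n)$ to $\tfrac{1}{N}\sum r_n r_n^T$ would violate \eqref{eq:cond1}; the mixed cross-moments $\tfrac{1}{N}\sum r_n^0(C\delta_n^v - v_n)^T$ can be controlled using the independence of $r_n^0$ from the attacker's observable history, leaving $\tfrac{1}{N}\sum (C\delta_n^v - v_n)(C\delta_n^v - v_n)^T \to 0$. Because $v_n$ enters $C\delta_n^v - v_n$ directly with coefficient $-I$, inverting this map reduces to stably filtering the output through the Schur-stable dynamics of $(A+LC)$, which transfers the Ces\`aro vanishing back to $v_n$ itself, yielding $\tfrac{1}{N}\sum v_n^T v_n \to 0$. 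The main obstacle will be this final inversion step: rigorously transferring Ces\`aro vanishing from the attack-induced residual back to the attack itself requires a quantitative filtering estimate rather than just a pointwise inverse, and the stabilizability of $(A,B)$, detectability of $(A,C)$, and invertibility of $\Sigma_e$ are each needed to ensure that the relevant linear operators are well posed and that the cross-moments cleanly isolate the attack contribution.
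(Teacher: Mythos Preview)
The paper does not prove this theorem: it is quoted verbatim from \cite[Theorem~1]{Hespanhol2017} as a summary of prior work, and no argument is supplied in the present paper. There is therefore no in-paper proof against which to compare your proposal.

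On the proposal itself, the overall architecture---split $\delta_n$ into a nominal and an attack-driven part, use the privacy of $e_n$ together with causality to kill the cross-correlation in \eqref{eq:cond2}, then use \eqref{eq:cond1} to force the attack-induced residual to vanish in Ces\`aro mean, and finally filter back through the Schur-stable $(A+LC)$ dynamics---matches the intuition that motivates Dynamic Watermarking. One step, however, is not yet justified: you claim that the mixed moments $\tfrac{1}{N}\sum r_n^0(C\delta_n^v - v_n)^T$ can be dismissed ``using the independence of $r_n^0$ from the attacker's observable history.'' That independence does not hold in general. The attacker observes $y_m = Cx_m + z_m + v_m$, which depends on the same noise processes $w,z$ that drive $r_n^0 = C\delta_n^0 - z_n$; consequently $v_n$ can be a measurable function of past $(w_m,z_m)$ and hence correlated with $r_n^0$. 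Attack model~2 in Section~\ref{sec:exp} is an explicit instance where $v_n$ uses $Cx_n+z_n$ directly. Any complete proof must control this cross term by a different mechanism than independence, and your sketch does not yet indicate how. The second obstacle you flag yourself---transferring Ces\`aro vanishing of $C\delta_n^v - v_n$ back to $v_n$---is genuine but comparatively routine given the Schur stability of $(A+LC)$.
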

For an attack to have a persistent effect on the system, the asymptotic attack power must be greater than 0. 
Since the test described in Theorem \ref{thm:asymp_attack_conv_to_0_hespanhol_thm1} requires evaluating an infinite sum, it does not consider attacks that are not persistent for all time. 
To detect attacks which are not persistent in time, the infinite time tests are transformed into finite time tests by taking a sliding window of the combined values of the sums in \eqref{eq:cond1} and \eqref{eq:cond2}:
\begin{align}
    D_n=&\sum_{n-\ell+1}^n\psi_n\psi_n^T\\
    \psi_n^T=\Sigma_\psi^{-1/2}&\begin{bmatrix}r_n^T & (e_{n-k})^T\end{bmatrix}\label{eq:normalize}
\end{align}
where $\ell$ is the size of the sliding window and:
\begin{align}
    \Sigma_\psi=\begin{bmatrix}\Sigma_r & 0\\0 & \Sigma_e\end{bmatrix}.
\end{align}
In this paper, we have normalized the vector $\psi_n$ in \eqref{eq:normalize} for the convenience of later derivations.
The matrix quantity $D_n$ is then approximately distributed according to a Wishart Distribution, $\mathcal{W}(I,\ell)$ \cite{Hespanhol2017}, giving rise to the Dynamic Watermarking detector we use in this work:
\begin{align}
    a_{\mathcal{D}}(R_n)=&\mathcal{L}_{m+q}^\ell(D_n)<\tau_\mathcal{D}
\end{align}
where $\mathcal{L}$ is the negative log likelihood function:
\begin{align}
\mathcal{L}_{i}^j(X)=&\frac{(i+1-j)}{2}\cdot\log(|X|)+\frac{1}{2}\text{trace}\left(X\right)+\nonumber\\
    &+\log\left(2^{ij/2}\Gamma_{(i)}(\frac{j}{2})\right).\label{eq:lik}
\end{align}
Note, we have also modified the formula from \cite{Hespanhol2017}, by including a constant term, and removing the scale matrix inverse from the trace in \eqref{eq:lik} due to the normalization carried out in \eqref{eq:normalize}. 
Note that for $n<\ell+k^\prime$, the test statistic value is defined as $0$.
\section{Analytical Comparisons}
\label{sec:ana}



This section describes the metric for the attack capability, and derives methods for approximating the attack capability and the $R_{FA}$ for the $\chi^2$, CUSUM, MEWMA, and Dynamic Watermarking detectors.\add{ Due to space considerations proofs for the Theorems and Lemmas in this section can be found in the technical report \cite{Porter2018}.}


\subsection{Attack Capability}
\label{sec:reach}
%
Assuming the $A$ matrix is Schur Stable, the capability of an attack can be measured by its ability to affect the observer error $\delta_n$.
A reachable set of the observer error can evaluate the attack capability, but, since the noise is supported over an infinitely large set, this reachable set would have infinite volume. 
As a result, this work focuses on computing the volume of the reachable set of the portion of the observer error corresponding to the residual under the condition of no alarms being raised. 
To provide a rigorous definition of this set, we introduce some additional notation and definitions.

Using superposition, one can split the observer error described in \eqref{eq:observe_error} into two pieces:
\begin{align}
\delta^{(a)}_{n+1}&=(A+LC)\delta_n^{(a)}-Lz_n-Lv_n\label{eq:observer_sep}\\
\delta^{(b)}_{n+1}&=(A+LC)\delta_n^{(b)}-w_n.
\end{align}
The observer error is then $\delta_n=\delta_n^{(a)}+\delta_n^{(b)}$. 
Here $\delta_n^{(a)}$ is the portion related to the residual, which can be seen by applying \eqref{eq:res} and \eqref{eq:norm_res} to \eqref{eq:observer_sep}:
\begin{align}
    \delta^{(a)}_{n+1}&=A\delta^{(a)}_n+L\Sigma_r^{1/2}\bar{r}_n.\label{eq:errdyn2}
\end{align}
Since an attack is only able to affect the $\delta_n^{(a)}$ portion of the observer error, the other portion is ignored while evaluating attack capability.
For each $n\in\mathbb{N}$, denote the \emph{reachable set of $\delta_n^{(a)}$ at a given time step $n$ under the condition of no alarms for a threshold $\tau_*$} as $\mathcal{R}_n^{\tau_*}$ and define it as:
\begin{align}
\mathcal{R}_n^{\tau_*}=\{\delta_n^{(a)}~|~\delta^{(a)}_n=\bar{A}_{n-1}R_{n-1},~R_{n-1}\in\Omega_n^{\tau_*}\}\label{eq:Rnset}
\end{align}
where:
\begin{align}
    \Omega_{n-1}^{\tau_*}=\{R_{n-1}~|~ a_{*}(R_{n-1})<\tau_*~\forall i<n\}\label{eq:omega},
\end{align}
and:
\begin{align}
    \bar{A}_n&=\begin{bmatrix}L\Sigma_r^{1/2}&AL\Sigma_r^{1/2}\hdots~ A^{n}L\Sigma_r^{1/2}\end{bmatrix}.\label{eq:abar}
\end{align}
Furthermore, we denote the \emph{steady state reachable set under the condition of no alarms for a threshold $\tau_*$} as $\mathcal{R}^{\tau_*}$ and define it as:
\begin{align}
\mathcal{R}^{\tau_*}=\{\delta^{(a)}~|~\forall n\in\mathbb{N},~\exists m\in\mathbb{N} \text{ s.t.} ~m>n, \delta^{(a)}\in\mathcal{R}_m^{\tau_*} \}.\label{eq:Rset}
\end{align}
Finally, we evaluate the attack capability by measuring the volume of the steady state reachable set under the condition of no alarms under a threshold $\tau_*$, which is defined as:
\begin{align}\label{eq:VRS}
    V_{RS}(\tau_*)=\mu(\mathcal{R}^{\tau_*}),
\end{align}
where $\mu$ denotes the Lebesgue measure.

Calculating the set $\mathcal{R}^{\tau_*}$ can be difficult, so we first derive a method for calculating $\mathcal{R}^{\tau_*}_n$:
\begin{thm}
Suppose $\tau_*\in\mathbb{R}$ and $\bar{A}_{n-1},~R_{n-1}$, $\mathcal{R}_n^{\tau_*}$, and $\Omega_{n-1}^{\tau_*}$ are as in \eqref{eq:abar}, \eqref{eq:R}, \eqref{eq:Rnset}, and \eqref{eq:omega}, respectively. 
Suppose $v:\mathbb{R}^q\rightarrow \mathbb{R}$ is the solution to:
\begin{flalign}
    & & \underset{v\in \mathcal{C}}{\text{inf}}\hspace*{0.25cm} & \int v(\delta) ~d\delta &&\\
    & & \text{s.t.}\hspace*{0.25cm} & v(\delta)\geq 0 && \delta\in\mathbb{R}^q\\
    & & & v(\bar{A}_{n-1}R_{n-1})-1\geq 0 && R_{n-1}\in \Omega^{\tau_*}_{n-1}\label{eq:const}
\end{flalign}
where $\mathcal{C}$ is the space of continuous functions.
Then the 1 super-level set of $v$ is an outer approximation to $\mathcal{R}_n^{\tau_*}$
\end{thm}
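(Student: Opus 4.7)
The proof plan is essentially to unwind the definitions and observe that the constraint set of the optimization problem directly encodes the containment condition that defines an outer approximation.

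First I would translate the constraint \eqref{eq:const} into a statement about the set $\mathcal{R}_n^{\tau_*}$ rather than about vectors $R_{n-1}$. By the definition of $\mathcal{R}_n^{\tau_*}$ in \eqref{eq:Rnset}, every point $\delta^{(a)}_n \in \mathcal{R}_n^{\tau_*}$ can be written as $\bar{A}_{n-1} R_{n-1}$ for some $R_{n-1} \in \Omega_{n-1}^{\tau_*}$, and conversely every such $\bar{A}_{n-1} R_{n-1}$ belongs to $\mathcal{R}_n^{\tau_*}$. Hence the constraint $v(\bar{A}_{n-1} R_{n-1}) \geq 1$ for all $R_{n-1} \in \Omega_{n-1}^{\tau_*}$ is equivalent to the statement $v(\delta) \geq 1$ for all $\delta \in \mathcal{R}_n^{\tau_*}$.

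Next I would invoke this equivalence directly. Let $\mathcal{S} := \{\delta \in \mathbb{R}^q : v(\delta) \geq 1\}$ denote the 1 super-level set of the minimizer $v$. Since the minimizer is feasible, the equivalence just established gives $v(\delta) \geq 1$ for every $\delta \in \mathcal{R}_n^{\tau_*}$, so $\mathcal{R}_n^{\tau_*} \subseteq \mathcal{S}$. This inclusion is exactly the property of being an outer approximation, so the theorem follows.

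There is no real obstacle in the proof itself; the substance of the result is in the \emph{setup} rather than in a deep argument. The only points that deserve a brief sanity check are: (i) that the optimization is not vacuous, i.e.\ some feasible continuous $v$ exists (for instance a large constant function, or a smooth bump that exceeds $1$ on any bounded overapproximation of $\mathcal{R}_n^{\tau_*}$), and (ii) that the 1 super-level set is the correct level to read off containment, which follows from the right-hand side of \eqref{eq:const} being $1$. Minimizing $\int v(\delta)\, d\delta$ is what encourages the super-level set to be tight, but that tightness is not part of what the theorem claims, so I would not need to address it in the proof.
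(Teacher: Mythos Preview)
Your proposal is correct and follows essentially the same approach as the paper: take an arbitrary $\delta\in\mathcal{R}_n^{\tau_*}$, write it as $\bar{A}_{n-1}R_{n-1}$ for some $R_{n-1}\in\Omega_{n-1}^{\tau_*}$ via \eqref{eq:Rnset}, and conclude $v(\delta)\geq 1$ from the feasibility constraint \eqref{eq:const}. The paper's proof is exactly this three-line argument; your additional remarks on feasibility and tightness are reasonable sanity checks but not part of the paper's proof.
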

\remove{\begin{proof}
Let $\delta\in\mathcal{R}_n^{\tau_*}$.
Then, from \eqref{eq:Rnset}, there exists an $R_{n-1}\in\Omega_{n-1}^{\tau_*}$ such that $\delta=\bar{A}_{n-1}R_{n-1}$. 
The constraint in \eqref{eq:const} then gives $v(\bar{A}_{n-1}R_{n-1})=v(\delta)\geq 1$.
\end{proof}}

To make this problem computationally tractable, we optimize over polynomial functions of fixed degree instead of continuous functions, and describe the positivity constraint, \eqref{eq:const}, with a Sums-of-Squares constraint. 
We then apply Sums-of-Squares Programming to generate an outer approximation to the reachable set. 
To replace \eqref{eq:const} with a Sums of Squares constraint, $\Omega_n^{\tau_*}$ must first be replaced with a semi-algebraic set \cite[Theorem 2.14]{Lasserre2010MomentsApplications}.
To simplify our exposition, we denote by $\Theta_n^{\tau_*}$ a collection of semi-algebraic constraints such that $\Omega_{n}^{\tau_*}\subseteq\Theta_n^{\tau_*}$.
In fact, as we show next, for many detectors, $\Omega_{n}^{\tau_*} = \Theta_n^{\tau_*}$.

For the $\chi^2$ detector, the constraint of no alarms is a quadratic constraint on the residual, so:
\begin{align}
\Theta_n^{\tau_{\chi^2}} = \left\{R_n~|~R_n^TQ_{(i,n)}^{\tau_{\chi^2}}R_n<1~i=0,...,n\right\}\label{eq:theta1}
\end{align}
where:
\begin{align}\label{eq:chi_con_set_quad2}
    Q_{(i,n)}^{\tau_{\chi^2}}&=\frac{1}{\tau_{\chi^2}}\begin{bmatrix}0_{q(n-i)}&0&0\\0&I_{q}&0\\0&0&0_{q(i)}\end{bmatrix}.
\end{align}
Note that $\Theta_n^{\tau_{\chi^2}}=\Omega_n^{\tau_{\chi^2}}$ since $\frac{a_{\chi^2}(R_i)}{\tau_{\chi^2}}=R_n^TQ_{(i,n)}^{\tau_{\chi^2}}R_n$ for all $i\leq n$.

For the CUSUM detector:
\begin{align}
\Theta_n^{\tau_{C}} = \left\{R_n~|~R_n^TQ_{(i,j,n)}^{\tau_{C}}R_n<1~i=0,...,n~j\leq i\right\}\label{eq:theta2}
\end{align}
where:
\begin{align}\label{eq:CUSUM_con_set_quad2}
Q_{(i,j,n)}^{\tau_C}&=\frac{1}{\tau_C+\gamma (j+1)}\begin{bmatrix}0_{q(n-i)}&0&0\\0&I_{q(j+1)}&0\\0&0&0_{q(i-j)}\end{bmatrix}.
\end{align}
Note that $\Theta_n^{\tau_{C}}=\Omega_n^{\tau_{C}}$, since:
\begin{align}
    a_{C}(R_i)=\max\left(\left\{\sum_{h=i-j}^i(\bar{r}_h^T\bar{r}_h-\gamma)~|~j\leq i\right\},0\right)
\end{align}
and:
\begin{align}
    R_n^TQ_{(i,j,n)}R_n=\frac{1}{\tau_C+\gamma (j+1)}\sum_{h=i-j}^{i}\bar{r_h}^T\bar{r}_h<1
\end{align}
can be rearranged to form:
\begin{align}
    \sum_{h=i-j}^i(\bar{r}_h^T\bar{r}_h-\gamma)<\tau_C.
\end{align}

For the MEWMA detector note that:
\begin{align}
\Theta_n^{\tau_M} = \left\{R_n~|~R_n^TQ_{(i,n)}^{\tau_M}R_n<1~i=0,...n\right\}\label{eq:theta3}
\end{align}
where:
\begin{align}\label{eq:MEWMA_con_set_quad2}
Q_{(i,n)}^{\tau_M} &= \frac{2-\beta}{\beta\tau_M}\begin{bmatrix}0_{q(n-i)\times q}\\\beta I_{q}\\(1-\beta)\beta I_q\\\vdots\\(1-\beta)^{i}\beta I_q\end{bmatrix}\begin{bmatrix}0_{q(n-i)\times q}\\\beta I_q\\(1-\beta)\beta I_q\\\vdots\\(1-\beta)^{i}\beta I_q\end{bmatrix}^T.
\end{align}
Note that $\Theta_n^{\tau_{M}}=\Omega_n^{\tau_{M}}$ since $\frac{a_{M}(R_i)}{\tau_{M}}=R_n^TQ_{(i,n)}^{\tau_{M}}R_n$ for all $i\leq n$.

While $\Omega_{n}^{\tau_*}$ is already a semi-algebraic set for the $\chi^2$, CUSUM and the MEWMA detectors, this is not true for the Dynamic Watermarking detector due to the log function in \eqref{eq:lik}. 
Therefore, we consider an outer approximation to $\Omega_n^{\tau_\mathcal{D}}$ described via a quadratic constraint:
\begin{thm}\label{thm:SHREYAS_LABELED_THIS}
Suppose $\Omega_n^{\tau_\mathcal{D}}$ is as in \eqref{eq:omega}, $\ell$ is the window size of the Dynamic Watermarking detector, $\tau_\mathcal{D}$ is the threshold of the detector, $k^\prime$ is as in \eqref{eq:k}, $q$ is the dimension of the residual, $m$ is the dimension of the input signal, and:
\begin{align}
    \Theta_n^{\tau_\mathcal{D}} = \left\{R_n~|~R_n^TQ_{(i,n)}^{\tau_\mathcal{D}}R_n<1~i=\ell+k^\prime,...,n\right\},\label{eq:theta4}
\end{align}
where:
\begin{align}\label{eq:DW_con_set_quad2}
    Q_{(i,n)}^{\tau_\mathcal{D}}=\frac{1}{(m+q)\epsilon}\begin{bmatrix}0_{q(n-i)} & 0 & 0\\0 & I_{q\ell} & 0\\0 & 0 & 0_{q(i-\ell)}\end{bmatrix}
\end{align}
and where $\epsilon>\ell-1-q-m$ is a solution to:
\begin{align}    
    \tau_\mathcal{D}=\frac{(q+m)\epsilon}{2} &+\frac{(q+m+1-\ell)}{2}\log(\epsilon^{q+m}) + \nonumber \\&+\log\left(2^{(q+m)\ell/2}\Gamma_{(q+m)}(\frac{\ell}{2})\right).\label{eq:epsilonD}
\end{align}
Then $\Omega_n^{\tau_\mathcal{D}} \subset \Theta_n^{\tau_\mathcal{D}}$.
\end{thm}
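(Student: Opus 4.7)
The plan is to show that whenever $a_\mathcal{D}(R_i)=\mathcal{L}_{m+q}^{\ell}(D_i)<\tau_\mathcal{D}$ for every $i=\ell+k',\dots,n$, one has $R_n^TQ_{(i,n)}^{\tau_\mathcal{D}}R_n<1$ for the same $i$. The matrix $Q_{(i,n)}^{\tau_\mathcal{D}}$ extracts the window of normalized residuals $\bar{r}_{i-\ell+1},\ldots,\bar{r}_i$, so $R_n^TQ_{(i,n)}^{\tau_\mathcal{D}}R_n=\frac{1}{(m+q)\epsilon}\sum_{h=i-\ell+1}^{i}\bar{r}_h^T\bar{r}_h$. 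Because the normalization in \eqref{eq:normalize} makes $\operatorname{trace}(D_i)=\sum_{h=i-\ell+1}^{i}\bigl(\bar{r}_h^T\bar{r}_h+\bar{e}_{h-k'}^T\bar{e}_{h-k'}\bigr)$ dominate this sum, it will suffice to prove $T_i:=\operatorname{trace}(D_i)<(m+q)\epsilon$ under the silence assumption.

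My first move is to squeeze the log-determinant in $\mathcal{L}$ using AM--GM on the eigenvalues of the PSD matrix $D_i$, giving $\log|D_i|\le(q+m)\log(T_i/(q+m))$. In the standard regime $\ell\ge q+m+1$ under which the Wishart density is well-posed, the coefficient $(q+m+1-\ell)/2$ in \eqref{eq:lik} is non-positive, so multiplying reverses the inequality and yields $\mathcal{L}_{m+q}^{\ell}(D_i)\ge g(T_i)$, where
\begin{equation*}
  g(T)=\frac{T}{2}+\frac{(q+m+1-\ell)(q+m)}{2}\log\!\bigl(T/(q+m)\bigr)+\log\!\bigl(2^{(q+m)\ell/2}\Gamma_{(q+m)}(\ell/2)\bigr).
\end{equation*}
A direct substitution then shows that the defining equation \eqref{eq:epsilonD} for $\epsilon$ is precisely the statement $g((q+m)\epsilon)=\tau_\mathcal{D}$. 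Differentiating gives $g'(T)=\tfrac{1}{2}+\frac{(q+m+1-\ell)(q+m)}{2T}$, so $g$ is strictly increasing on $T>(\ell-q-m-1)(q+m)$, and the hypothesis $\epsilon>\ell-1-q-m$ places $(m+q)\epsilon$ strictly inside this increasing branch.

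Combining the pieces, the silence assumption gives $g(T_i)\le\mathcal{L}_{m+q}^{\ell}(D_i)<\tau_\mathcal{D}=g((m+q)\epsilon)$. I then split into cases: if $T_i>(\ell-q-m-1)(q+m)$, monotonicity of $g$ on this range forces $T_i<(m+q)\epsilon$; otherwise $T_i\le(\ell-q-m-1)(q+m)<(m+q)\epsilon$ directly from the hypothesis on $\epsilon$. Either way $\sum_{h=i-\ell+1}^{i}\bar{r}_h^T\bar{r}_h\le T_i<(m+q)\epsilon$, which is equivalent to $R_n^TQ_{(i,n)}^{\tau_\mathcal{D}}R_n<1$, putting $R_n\in\Theta_n^{\tau_\mathcal{D}}$.

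The principal obstacle is the two-branch behavior of $g$: because $g$ is convex but not monotone on $(0,\infty)$ when $\ell>q+m+1$, the deduction $g(T_i)<g((m+q)\epsilon)\Rightarrow T_i<(m+q)\epsilon$ is only valid once we know $(m+q)\epsilon$ sits in the increasing region, which is precisely what the otherwise-mysterious inequality $\epsilon>\ell-1-q-m$ guarantees. Verifying that this threshold lines up exactly with the turning point of $g$, and separately disposing of the small-$T_i$ branch, is the delicate part of the argument; everything else is manipulation of \eqref{eq:lik} and \eqref{eq:epsilonD}.
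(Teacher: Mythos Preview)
Your proof is correct. Both arguments ultimately bound $\operatorname{trace}(D_i)$ by $(m+q)\epsilon$, but the mechanics differ. The paper works in the full $(m+q)$-dimensional eigenvalue space: it rewrites $\mathcal{L}_{m+q}^{\ell}$ as a function $\mathfrak{L}$ of the eigenvalues $\lambda_1,\ldots,\lambda_{m+q}$, checks that $\mathfrak{L}$ is convex on the positive orthant (its Hessian is diagonal with entries $(\ell-1-m-q)/(2\lambda_i^2)>0$), observes that the diagonal point $(\epsilon,\ldots,\epsilon)$ lies on the boundary of the $\tau_{\mathcal D}$-sublevel set with gradient a positive multiple of $[1,\ldots,1]^T$, and then invokes the supporting-hyperplane property of convex sublevel sets to conclude $\sum_i \lambda_i < (m+q)\epsilon$. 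You instead collapse to one dimension at the outset by applying AM--GM to the eigenvalues, obtaining $\mathcal{L}_{m+q}^{\ell}(D_i)\ge g(\operatorname{trace} D_i)$, and then run a one-variable monotonicity argument on $g$. The two routes are dual views of the same extremal principle---equality in AM--GM is precisely the diagonal point where the paper places its hyperplane---but your version is a shade more elementary and makes the role of the hypothesis $\epsilon>\ell-1-m-q$ more explicit through the two-branch case split on $T_i$; the paper's geometric picture, on the other hand, would generalize more readily if one wanted bounds on linear functionals of the eigenvalues other than the trace.
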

\remove{To prove this theorem, consider the following lemma:
\begin{lm}
\label{lm:convex}
Suppose $(g_i){i \in \mathbb{N}}$ is a sequence of vectors where $g_i\in\mathbb{R}^q$, $\tau\in\mathbb{R}$ such that $\tau>0$, and $\ell\in\mathbb{N}$ such that $\ell>q+1$. Furthermore suppose that:
\begin{align}
    \mathcal{L}_q^\ell(\sum_{i=1}^\ell g_ig_i^T)<\tau
\end{align}
where the function $\mathcal{L}_q^\ell$ is as in \eqref{eq:lik}. Then:
\begin{align}
\sum_{i=1}^\ell g_i^Tg_i<\epsilon q,
\end{align}
where $\epsilon>\ell-1-q$ is a solution to:
\begin{align}
    \tau=\frac{(q)\epsilon}{2} &+\frac{(q+1-\ell)}{2}\log(\epsilon^{q}) + \nonumber \\&+\log\left(2^{(q)\ell/2}\Gamma_{(q)}(\frac{\ell}{2})\right).\label{eq:epsilon}
\end{align}
\end{lm}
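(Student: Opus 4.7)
The plan is to express the hypothesis and conclusion in terms of the eigenvalues of $X := \sum_{i=1}^{\ell} g_i g_i^T$ and then apply Jensen's inequality to the strictly convex function arising inside $\mathcal{L}_q^\ell$. First I would note that $\sum_{i=1}^\ell g_i^T g_i = \text{trace}(X) = \sum_{j=1}^q \lambda_j$, where $\lambda_1, \ldots, \lambda_q$ are the eigenvalues of $X$. These are necessarily strictly positive under the hypothesis: any zero eigenvalue would make $\log|X| = -\infty$, and since $q+1-\ell < 0$ by assumption this would force $\mathcal{L}_q^\ell(X) = +\infty$, contradicting $\mathcal{L}_q^\ell(X) < \tau$.

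Rewriting \eqref{eq:lik} through the eigenvalues of $X$, the hypothesis becomes $\sum_{j=1}^q f(\lambda_j) < T$, where $f(\lambda) := \tfrac{q+1-\ell}{2}\log\lambda + \tfrac{\lambda}{2}$ and $T := \tau - \log(2^{q\ell/2}\Gamma_{(q)}(\tfrac{\ell}{2}))$. A direct computation shows that the defining equation \eqref{eq:epsilon} for $\epsilon$ is precisely $q f(\epsilon) = T$, so the hypothesis can be recast as $\sum_{j=1}^q f(\lambda_j) < q f(\epsilon)$, while the desired conclusion $\sum_j \lambda_j < q\epsilon$ is equivalent to $\bar\lambda < \epsilon$ for $\bar\lambda := \tfrac{1}{q}\sum_j \lambda_j$.

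The crucial observation is that $\ell > q+1$ makes $f$ strictly convex on $(0,\infty)$, because $f''(\lambda) = (\ell-q-1)/(2\lambda^2) > 0$. Jensen's inequality applied to the $q$ positive eigenvalues then yields $q f(\bar\lambda) \leq \sum_{j=1}^q f(\lambda_j) < q f(\epsilon)$, and dividing by $q$ gives $f(\bar\lambda) < f(\epsilon)$.

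To finish, I would use that $f$ is U-shaped: it has a unique minimum at $\lambda^\ast = \ell-q-1$, is strictly decreasing on $(0, \ell-q-1)$, and is strictly increasing on $(\ell-q-1, \infty)$. The sublevel set $\{x > 0 : f(x) < f(\epsilon)\}$ is therefore an open interval $(\epsilon_-, \epsilon_+)$, and the hypothesis $\epsilon > \ell-1-q$ places $\epsilon$ on the increasing branch, pinpointing it as the right endpoint $\epsilon_+$. Hence $\bar\lambda < \epsilon$, giving $\sum_j \lambda_j = q\bar\lambda < q\epsilon$, which is the desired bound. The main technical step is identifying the correct root on the increasing branch using the hypothesis $\epsilon > \ell-1-q$; this is what connects the abstract Jensen bound back to the explicit definition of $\epsilon$ in the statement of the lemma.
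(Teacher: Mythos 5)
Your proposal is correct, and it reaches the same destination as the paper's proof through a slightly different convexity mechanism. Both arguments begin identically: reduce to the eigenvalues $\lambda_1,\dots,\lambda_q$ of $X=\sum_i g_ig_i^T$, observe that the trace is their sum, and exploit the fact that $\ell>q+1$ makes the per-eigenvalue function $\frac{q+1-\ell}{2}\log\lambda+\frac{\lambda}{2}$ convex with minimizer at $\ell-1-q$. Where you diverge is in how the trace bound is extracted. The paper works with the multivariate function $\mathfrak{L}_q^\ell(\lambda_1,\dots,\lambda_q)$, checks that its gradient at the symmetric boundary point $(\epsilon,\dots,\epsilon)$ is a positive multiple of the all-ones vector (this is where $\epsilon>\ell-1-q$ enters), and concludes that the tangent hyperplane $\{\lambda:\sum_i\lambda_i=q\epsilon\}$ supports the convex sublevel set, so every point of that set satisfies $\sum_i\lambda_i<q\epsilon$. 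You instead apply Jensen's inequality to the scalar function $f$ to get $qf(\bar\lambda)\leq\sum_jf(\lambda_j)<qf(\epsilon)$, then use the U-shape of $f$ and the hypothesis $\epsilon>\ell-1-q$ to identify $\epsilon$ as the right endpoint of the sublevel interval, forcing $\bar\lambda<\epsilon$. The two finishes are dual views of the same convexity fact, but yours is arguably the more elementary and self-contained: it needs only scalar calculus and Jensen, avoids invoking the supporting-hyperplane property of convex sublevel sets, and you are also more careful than the paper in noting that the eigenvalues must be strictly positive (else the log term drives $\mathcal{L}_q^\ell$ to $+\infty$, contradicting the hypothesis), which is needed before differentiating or applying Jensen on $(0,\infty)$. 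The paper's hyperplane formulation has the minor advantage of making geometrically explicit that the bound is tight exactly at the symmetric point $(\epsilon,\dots,\epsilon)$.
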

\begin{proof} (Lemma \ref{lm:convex})
Denote the eigenvalues of $\sum_{i-1}^\ell g_ig_i^T$ as $\lambda_1,...\lambda_q$. 
The eigenvalues are all non-negative due to the construction of the matrix. 
Note that we can rewrite \eqref{eq:lik} as a new function $\mathfrak{L}_i^j$ in terms of these eigenvalues:
\begin{align}
    \mathcal{L}_q^\ell\left(\sum_{i-1}^\ell g_ig_i^T\right)&=\mathfrak{L}_q^\ell(\lambda_1,...,\lambda_q)\\
    &=\sum_{i=1}^q\frac{(q+1-\ell)}{2}\cdot\log(\lambda_i)+\frac{\lambda_i}{2}+\nonumber\\
    &+\log\left(2^{(q)\ell/2}\Gamma_{(q)}(\frac{\ell}{2})\right).
\end{align}
Furthermore we have that $\mathfrak{L}_q^\ell$ is convex since:
\begin{align}
    \nabla^2\mathfrak{L}_q^\ell(\lambda_1,...,\lambda_q)=\begin{bmatrix}\frac{(\ell-1-q)}{2\lambda_1^2} & 0 & 0\\0 &\ddots&0\\0&0&\frac{\ell-1-q}{2\lambda_q^2}\end{bmatrix}
\end{align}
is positive definite for $\lambda_i>0$. Also note that the function achieves a global minimum at $\lambda_i=\ell-1-q~i=1,...,q$ since:
\begin{align}
    \nabla\mathfrak{L}_q^\ell(\lambda_1,...,\lambda_q)=\begin{bmatrix}\frac{q+1-\ell}{2\lambda_1}+\frac{1}{2}\\\vdots\\
    \frac{q+1-\ell}{2\lambda_q}+\frac{1}{2}\end{bmatrix}
\end{align}
is zero at this point.
If we consider the particular case where $\lambda_1=...=\lambda_q=\epsilon$. Then $(\epsilon,...,\epsilon)$ is a boundary point to the $\tau$ level set of $\mathfrak{L}_q^\ell$. Furthermore we have that the derivative at that point is:
\begin{align}
    \nabla\mathfrak{L}_q^\ell(\epsilon,...,\epsilon)=\begin{bmatrix}\frac{q+1-\ell}{2\epsilon}+\frac{1}{2}\\\vdots\\
    \frac{q+1-\ell}{2\epsilon}+\frac{1}{2}\end{bmatrix}
\end{align}
which is some positive scalar times the vector $\begin{bmatrix}1&\hdots&1\end{bmatrix}^T$. Since the tangent plane at this point is a supporting hyperplane to the $\tau$ sublevel set of $\mathfrak{L}_q^\ell$ we then have that:
\begin{align}
\sum_{i=1}^\ell g_i^Tg_i=\sum_{i=1}^q\lambda_i<\epsilon q
\end{align}
for all $g_i$ such that $\mathcal{L}_q^\ell(\sum_{i=1}^\ell g_ig_i^T)<\tau$.
\end{proof}
Now we return to the prove the theorem

\begin{proof} (Theorem \ref{thm:SHREYAS_LABELED_THIS})
For a given $R\in\Omega_n^{\tau_\mathcal{D}}$:
\begin{align}
    a_{\mathcal{D}}(R_i)=\mathcal{L}_{(m+q)}^\ell\left(\sum_{j=i-\ell+1}^i\psi_j\psi_j^T\right)<\tau_\mathcal{D}
\end{align}
for $i=\ell+k^\prime,...,n$. Lemma \ref{lm:convex} then gives us that:
\begin{align}
    \sum_{j=i-\ell+1}^i\psi_j^T\psi_j=\sum_{j=i-\ell+1}^i\bar{r}_j^T\bar{r}_j+\sum_{j=i-\ell+1}^ie_j^Te_j<(m+q)\epsilon
\end{align}
for $i=\ell+k^\prime,...,n$. Furthermore we have that:
\begin{align}
    R_n^TQ_{i,n}^{\tau_\mathcal{D}}R_n=\sum_{j=i-\ell+1}^i\bar{r}_j^T\bar{r}_j<(m+q)\epsilon
\end{align}
for $i=\ell+k^\prime,...,n$. Therefore $R\in\Theta_n^{\tau_\mathcal{D}}$.
\end{proof}}

Now, we construct an outer approximation to $\mathcal{R}_n^{\tau_*}$ using the constraint sets $\Theta_n^{\tau_*}$:
\begin{thm}
\label{thm:sos}
Suppose $\bar{A}_{n-1}$ and $R_{n-1}$ are defined as in \eqref{eq:abar} and \eqref{eq:R} respectively, $\mathcal{R}_n^{\tau_*}$ is the set in \eqref{eq:Rnset}, $\Phi$ is a compact semi-algebraic set such that $\mathcal{R}_n^{\tau_*} \subset \Phi$, $\Theta_{n-1}^{\tau_*}$ is defined based on the choice of detector 
and:
\begin{align}
    H_n^{\tau_*}=\frac{1}{1-c}H
\end{align}
where $H$ and $c$ are the solution to:
\begin{flalign}
    & & \underset{H\in S~c\in\mathbb{R}}{\inf}\hspace*{0.25cm} & \int_{\Phi} \left( \delta^TH\delta+c \right) ~d\delta &&\\
    & & \text{s.t.}\hspace*{0.25cm} & \delta^TH\delta+c\geq 0 && \delta\in\Phi\\
    & & & R_{n-1}^T\bar{A}_{n-1}^TH\bar{A}_{n-1}R_{n-1}+c-1\geq 0 && R_{n-1}\in \Theta_n^{\tau_*}\label{eq:const2}
\end{flalign}
where $S\subset\mathbb{R}^{p\times p}$ is the set of symmetric matrices.
Then $\mathcal{R}_n^{\tau_*}\subseteq\{\delta~|~\delta^TH_n^{\tau_*}\delta\leq1\}$.
\end{thm}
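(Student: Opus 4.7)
The plan is to instantiate Theorem 2 with the quadratic ansatz $v(\delta)=\delta^T H\delta+c$, and to replace the exact no-alarm set $\Omega_{n-1}^{\tau_*}$ by the semi-algebraic superset $\Theta_{n-1}^{\tau_*}$ so that the resulting positivity conditions become enforceable (via the S-procedure or sums-of-squares). The optimization in Theorem \ref{thm:sos} is then a tractable restriction of the one in Theorem 2, and the claimed outer approximation will drop out from two observations: enlarging the quantifier set only strengthens the constraint, and the rescaling $H_n^{\tau_*}=H/(1-c)$ converts the super-level set of $v$ into the sub-level set $\{\delta^T H_n^{\tau_*}\delta\le1\}$.

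First I would substitute $v(\delta)=\delta^T H\delta+c$ into the constraints of Theorem 2. The nonnegativity requirement on $v$ gets restricted from all of $\mathbb{R}^q$ to the compact set $\Phi$; this restriction is harmless for our purpose, since the super-level-set inclusion is only claimed for $\delta\in\mathcal{R}_n^{\tau_*}\subset\Phi$. The constraint $v(\bar A_{n-1}R_{n-1})\ge 1$ on $R_{n-1}\in\Omega_{n-1}^{\tau_*}$ becomes $R_{n-1}^T\bar A_{n-1}^T H\bar A_{n-1}R_{n-1}+c-1\ge0$ on $R_{n-1}\in\Theta_{n-1}^{\tau_*}$; since $\Omega_{n-1}^{\tau_*}\subseteq\Theta_{n-1}^{\tau_*}$ holds by construction for each of the $\chi^2$, CUSUM, MEWMA, and Dynamic Watermarking detectors, this is a strengthening, so every $(H,c)$ feasible in the Theorem \ref{thm:sos} program still satisfies the original Theorem 2 constraints.

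Next I would establish the pointwise containment. Fix $\delta\in\mathcal{R}_n^{\tau_*}$. By \eqref{eq:Rnset} there exists $R_{n-1}\in\Omega_{n-1}^{\tau_*}\subseteq\Theta_{n-1}^{\tau_*}$ with $\delta=\bar A_{n-1}R_{n-1}$. Evaluating the second program constraint at this $R_{n-1}$ yields $\delta^T H\delta+c-1\ge0$, i.e.\ $\delta^T H\delta\ge 1-c$. Multiplying both sides by $1/(1-c)$ and using the definition $H_n^{\tau_*}=H/(1-c)$ would then give the desired $\delta^T H_n^{\tau_*}\delta\le1$, provided the division flips the inequality.

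The main obstacle is precisely that sign condition. The program as written does not syntactically force $1-c<0$ (equivalently $c>1$) or $H\preceq0$, yet the conclusion requires the inequality to flip when dividing by $1-c$. The resolution is that for a quadratic $v$ to have a bounded 1-super-level set containing $\mathcal{R}_n^{\tau_*}$, an optimizer must satisfy $H\preceq0$ and $c>1$; otherwise the super-level set is empty, unbounded, or all of $\mathbb{R}^q$, which is either infeasible or a useless approximation. A clean write-up would therefore either verify this sign at any nondegenerate optimizer, or add $c>1$ as a standing hypothesis on the regime of interest. Once this is handled, the rest of the proof is a direct substitution using \eqref{eq:Rnset} and the inclusion $\Omega_{n-1}^{\tau_*}\subseteq\Theta_{n-1}^{\tau_*}$.
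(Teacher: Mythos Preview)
Your approach is essentially the same as the paper's: pick $\delta\in\mathcal{R}_n^{\tau_*}$, write it as $\bar A_{n-1}R_{n-1}$ with $R_{n-1}\in\Omega_{n-1}^{\tau_*}\subseteq\Theta_n^{\tau_*}$, apply the second constraint to get $\delta^TH\delta+c\ge1$, and divide by $1-c$.

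The one place you leave open---the sign of $1-c$---has a much simpler resolution than the optimizer analysis you sketch. The origin $R_{n-1}=0$ belongs to $\Omega_{n-1}^{\tau_*}$ (all test statistics vanish when every residual is zero, so no alarm is raised), hence $0\in\Theta_n^{\tau_*}$. Evaluating constraint \eqref{eq:const2} at $R_{n-1}=0$ immediately gives $c-1\ge0$. This holds for \emph{any} feasible pair $(H,c)$, not just optimizers, so there is no need to reason about bounded super-level sets or negative semidefiniteness of $H$. This is exactly how the paper closes the argument. (Strictly speaking the argument yields $c\ge1$; the paper asserts $c>1$ without further comment, so the edge case $c=1$ where $H_n^{\tau_*}$ is undefined is tacitly excluded.)
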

\remove{\begin{proof}
Let $\delta\in\mathcal{R}_n^{\tau_*}$.
Then, from \eqref{eq:Rnset}, we have that there exists an $R_{n-1}\in\Omega_{n-1}^{\tau_*}\subseteq\Theta_n^{\tau_*}$ such that $\delta=\bar{A}_{n-1}R_{n-1}$.
Constraint \eqref{eq:const2} then gives $R_{n-1}^T\bar{A}_{n-1}^TH\bar{A}_{n-1}R_{n-1}+c\geq1$.
Furthermore $c>1$ since $0\in\Omega_{n-1}^{\tau_*}$, so we can rearrange the inequality resulting in $\delta^T\frac{1}{1-c}H\delta=\delta^TH_n^{\tau_*}\delta\leq1$.
\end{proof}}

One can solve the program in Theorem \ref{thm:sos} using the Spotless optimization toolbox \cite{Tobenkin2018} which formulates the problem as a Semi-Definite Program that can be solved using commercial solvers such as MOSEK \cite{mosek}.
This program assumes that we can find a compact semi-algebraic set $\Phi$ that outer approximates $\mathcal{R}_n^{\tau_*}$, which can be done using the following lemma under the specific case that $N=n$:
\add{\setcounter{lm}{1}}\begin{lm}
\label{lm:bound2}
Suppose $N,n\in\mathbb{N}$, such that $N\geq n$ and if applicable, suppose $N>\ell+k^\prime$ if the detector is the Dynamic Watermarking detector. Furthermore suppose $\tau_*\in\mathbb{R}$ such that $\tau_*>0$, $\bar{A}_{n-1}$ and $R_{N-1}$
 are as in \eqref{eq:abar},\eqref{eq:R}. 
 Then there exists a $\eta\in\mathbb{R}$ such that:
 \begin{align}
     \{\delta=[0_{q\times q(N-n)}~\bar{A}_{n-1}]R_{N-1}~|~R_{N-1}\in\Theta_{N-1}^{\tau_*}\}\subset \mathcal{B}_{\eta}.
 \end{align}
 \end{lm}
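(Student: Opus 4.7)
The plan is to show that the set is the image of a linear map applied to a uniformly bounded subset of $\mathbb{R}^{Nq}$. First, observe that $[0_{q\times q(N-n)}~\bar{A}_{n-1}]R_{N-1}=\bar{A}_{n-1}\tilde{R}$, where $\tilde{R}:=[\bar{r}_{n-1}^T,\ldots,\bar{r}_0^T]^T$ collects the last $nq$ coordinates of $R_{N-1}$. Because $\bar{A}_{n-1}$ is a fixed matrix, $M_n:=\|\bar{A}_{n-1}\|$ is a finite constant, and submultiplicativity of the operator norm gives $\|\delta\|\leq M_n\|\tilde{R}\|$. Thus it suffices to produce a constant $\rho_*>0$ such that $\|\bar{r}_i\|\leq\rho_*$ for every $i\in\{0,\ldots,n-1\}$ whenever $R_{N-1}\in\Theta_{N-1}^{\tau_*}$; the claim then follows with $\eta=M_n\sqrt{n}\,\rho_*$.

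Second, I would verify this per-residual bound detector by detector by reading off the structure of the quadratic constraints. For the $\chi^2$ detector, inspection of $Q_{(i,N-1)}^{\tau_{\chi^2}}$ in \eqref{eq:chi_con_set_quad2} shows the $i$-th constraint is exactly $\bar{r}_i^T\bar{r}_i<\tau_{\chi^2}$, so one may take $\rho_{\chi^2}=\sqrt{\tau_{\chi^2}}$. For the CUSUM detector, specializing \eqref{eq:CUSUM_con_set_quad2} to $j=0$ yields $\bar{r}_i^T\bar{r}_i<\tau_C+\gamma$. For the MEWMA detector, the quadratic form in \eqref{eq:MEWMA_con_set_quad2} factors as the squared norm of the exponentially weighted sum $G_i$, giving $\|G_i\|^2<\tau_M\beta/(2-\beta)$; inverting the recursion $G_i=\beta\bar{r}_i+(1-\beta)G_{i-1}$ with $G_{-1}=0$ and applying the triangle inequality produces a uniform bound on $\|\bar{r}_i\|$. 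For the Dynamic Watermarking detector, every index $i\in\{k'+1,\ldots,N-1\}$ lies inside at least one sliding window $\{s-\ell+1,\ldots,s\}$ with $s\in\{\ell+k',\ldots,N-1\}$ because $N>\ell+k'$, so the constraint $R_{N-1}^TQ_{(s,N-1)}^{\tau_\mathcal{D}}R_{N-1}<1$, which expands to $\sum_{j=s-\ell+1}^s\bar{r}_j^T\bar{r}_j<(m+q)\epsilon$, forces $\|\bar{r}_i\|^2<(m+q)\epsilon$.

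The main obstacle is the Dynamic Watermarking case at the initial indices $i\in\{0,\ldots,k'\}$: these residuals never appear in any sliding-window constraint, mirroring the convention that the DW test statistic is defined to be zero for $n<\ell+k'$. To absorb them, I would split $\bar{A}_{n-1}\tilde{R}$ into the contribution from indices $i>k'$, which is bounded by the argument above, and the contribution $\sum_{i=0}^{\min(k',n-1)}A^{n-1-i}L\Sigma_r^{1/2}\bar{r}_i$ from the first $k'+1$ indices, which is treated as an initialization term of fixed columns whose $\bar{r}_i$ arguments are the pre-alarm-phase normalized residuals of the closed-loop system; these can be incorporated into $\eta$ via their (system-dependent) bounded range on the sample paths considered, or by restricting to $n\leq N-\ell$ so that $\tilde{R}$ is entirely contained in the union of sliding windows. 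Once this edge case is handled, combining the per-residual bound $\rho_*$ with $M_n\sqrt{n}$ produces the desired radius $\eta$, and the inclusion in $\mathcal{B}_\eta$ follows from the norm estimate of the first step.
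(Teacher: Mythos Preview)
Your overall strategy coincides with the paper's: bound each normalized residual using the detector-specific quadratic constraints, sum to bound $\|\tilde R\|$ (the paper bounds $\|R_{N-1}\|$ itself, but the mechanism is identical), and multiply by the operator norm of the block matrix. Your treatment of the $\chi^2$, CUSUM, and MEWMA cases is essentially the paper's argument, down to the reverse triangle inequality for MEWMA.

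The genuine gap is precisely the one you flag in the Dynamic Watermarking case, and neither of your proposed fixes resolves it. The set $\Theta_{N-1}^{\tau_\mathcal{D}}$ is defined purely by the algebraic inequalities $R_{N-1}^TQ_{(i,N-1)}^{\tau_\mathcal{D}}R_{N-1}<1$ for $i=\ell+k',\ldots,N-1$, and the coordinates $\bar r_0,\ldots,\bar r_{k'}$ simply do not appear in any of these constraints; consequently $\Theta_{N-1}^{\tau_\mathcal{D}}$ is unbounded in those directions, and the image under $[0\ \bar A_{n-1}]$ inherits that unboundedness whenever $\bar A_{n-1}$ has nonzero columns acting on them. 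Invoking ``sample-path'' boundedness is inapplicable here, since the lemma is a purely set-theoretic statement about $\Theta_{N-1}^{\tau_\mathcal{D}}$, not about stochastic trajectories. Your alternative restriction $n\le N-\ell$ also does not help: the uncovered indices are $0,\ldots,k'$ regardless of how $n$ and $N$ are related, so $\tilde R$ always contains unconstrained entries. It is worth noting that the paper's own proof of this lemma asserts $\sum_{i=0}^{N-1}\bar r_i^T\bar r_i\le N(m+q)\epsilon$ without addressing these initial indices either, so the paper glosses over the very issue you identified; a clean repair would require either augmenting the definition of $\Theta_n^{\tau_\mathcal{D}}$ with constraints on the initial residuals or restricting the lemma to the index range actually covered by the sliding windows.
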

 \remove{\begin{proof}(Lemma \ref{lm:bound2})
First we show that $\Theta_{N-1}^{\tau_*}$ is bounded. We denote the upper bounds for the norm of elements in $\Theta_{N-1}^{\tau_*}$ as $\sigma^{\tau_*}$, and we use the decomposition of $R_{N-1}=[\bar{r}_{N-1}^T~\hdots~\bar{r}_0^T]^T\in\Theta_{N-1}^{\tau_*}$.  For the $\chi^2$ detector we have that $\sigma^{\tau_{\chi^2}}=\sqrt{N\tau_{\chi^2}}$ since:
\begin{align}
    \|[\bar{r}_{N-1}^T~\hdots~\bar{r}_0^T]^T\|=\sqrt{\sum_{i=0}^{N-1}\bar{r}_{i}^T\bar{r}_i}\leq\sqrt{N\tau_{\chi^2}}.
\end{align}

Similarly for the CUSUM detector we have that $\sigma^{\tau_{C}}=\sqrt{N(\tau_C+\delta)}$ since:
\begin{align}
    \|[\bar{r}_{N-1}^T~\hdots~\bar{r}_0^T]^T\|=\sqrt{\sum_{i=0}^{N-1}\bar{r}_{i}^T\bar{r}_i}\leq\sqrt{N(\tau_{C}+\delta)}.
\end{align}

In the case of the MEWMA detector we have that  $\sigma^{\tau_{M}}=\sqrt{\frac{N\tau_M(2-\beta)}{\beta}}$ since:
\begin{align}\label{eq:MEWMA_Bound1}
    \|G_i\|=\|\beta\bar{r}_i+(1-\beta)G_{i-1}\|\leq\sqrt{\frac{\tau_M\beta}{2-\beta}}
\end{align}
and:
\begin{align}\label{eq:MEWMA_Bound2}
    \beta\|\bar{r}_i\|-(1-\beta)\sqrt{\frac{\tau_M\beta}{2-\beta}}\leq\|\beta\bar{r}_i+(1-\beta)G_{i-1}\|.
\end{align}
 Combining \eqref{eq:MEWMA_Bound1} and \eqref{eq:MEWMA_Bound2} we get:
 \begin{align}
 \|\bar{r}_i\|\leq\sqrt{\frac{\tau_M(2-\beta)}{\beta}}.
 \end{align}
 Then:
 \begin{align}
    \|[\bar{r}_{N-1}^T~\hdots~\bar{r}_0^T]^T\|=\sqrt{\sum_{i=0}^{N-1}\bar{r}_{i}^T\bar{r}_i}\leq\sqrt{\frac{N\tau_M(2-\beta)}{\beta}}.
\end{align}

In the case of the Dynamic Watermarking detector, we have that $\sigma^{\tau_{\mathcal{D}}}=\sqrt{N(m+q)\epsilon}$, where $\epsilon>\ell-1-q-m$ is the solution to \eqref{eq:epsilonD}, since:
\begin{align}
    \|[\bar{r}_{N-1}^T~\hdots~\bar{r}_0^T]^T\|=\sqrt{\sum_{i=0}^{N-1}\bar{r}_{i}^T\bar{r}_i}\leq\sqrt{N(m+q)\epsilon}.
\end{align}
Then, since:
 \begin{align}
     \|[0_{q\times q(N-n)}&~\bar{A}_{n-1}]R_{N-1}\|\leq\nonumber\\
     &\|[0_{q\times q(N-n)}~\bar{A}_{n-1}]\|~\|R_{N-1}\|,
 \end{align}
let $\eta=\|[0_{q\times q(N-n)}~\bar{A}_{n-1}]\|\sigma^{\tau_*}$.
Then:
 \begin{align}
     \{\delta=[0_{q\times q(N-n)}~\bar{A}_{n-1}]R_{N-1}~|~R_{N-1}\in\Theta_{N-1}^{\tau_*}\}\subset \mathcal{B}_{\eta}.
 \end{align}
 
 \end{proof}}

The program in Theorem \ref{thm:sos} gives an upper bound to $\mathcal{R}_n^{\tau_*}$, which we denote by:
\begin{align}
\mathcal{T}_n^{\tau_*}=\{\delta~|~\delta^TH_n^{\tau_*}\delta\leq1\}. \label{eq:T}  
\end{align}
We dilate $\mathcal{T}_n^{\tau_*}$ to obtain an outer approximation  to $\mathcal{R}^{\tau_*}$:

\begin{thm}
\label{thm:dil}
Suppose $\tau_*\in\mathbb{R}$ such that $\tau_*>0$, $\mathcal{R}^{\tau}$ is as  in \eqref{eq:Rset}, $\mathcal{T}_n^{\tau_*}$ is as in \eqref{eq:T}, and:
\begin{align}
    \mathcal{E}_n^{\tau_*}=\mathcal{T}_n^{\tau_*}\oplus \mathcal{B}_\epsilon,
\end{align}
where:
\begin{align}
    \epsilon=\frac{\|A^n\|}{\sqrt{s_1(H_n^{\tau_*})}(1-\|A^n\|)}.
\end{align}
Then $\mathcal{R}^{\tau_*}\subset \mathcal{E}_n^{\tau_*}$.
\end{thm}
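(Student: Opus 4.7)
The plan is to show that every $\delta^{(a)} \in \mathcal{R}^{\tau_*}$ decomposes into a point of $\mathcal{T}_n^{\tau_*}$ plus an error of norm at most $\epsilon$. Fix $\delta^{(a)} \in \mathcal{R}^{\tau_*}$; by \eqref{eq:Rset} one can pick a time index $t$ arbitrarily large with $\delta^{(a)} = \delta^{(a)}_t$ generated by a no-alarm trajectory $R_{t-1} \in \Omega_{t-1}^{\tau_*}$. Iterating the error recursion \eqref{eq:errdyn2} exactly $n$ steps yields
\begin{equation*}
\delta^{(a)}_t = A^n \delta^{(a)}_{t-n} + \bar{A}_{n-1} R',
\end{equation*}
where $R' = [\bar{r}_{t-1}^T, \dots, \bar{r}_{t-n}^T]^T$ holds the most recent $n$ normalized residuals. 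Applying this splitting repeatedly to $\delta^{(a)}_{t-n}, \delta^{(a)}_{t-2n}, \dots$ produces, for $k$ chosen so that $t-kn \in [0,n)$,
\begin{equation*}
\delta^{(a)}_t = \sum_{j=0}^{k-1} A^{jn}\, \bar{A}_{n-1} R'_j + A^{kn} \delta^{(a)}_{t-kn},
\end{equation*}
with each $R'_j$ consisting of the $n$ consecutive residuals $\bar{r}_{t-jn-1},\dots,\bar{r}_{t-(j+1)n}$ from the original no-alarm trajectory.

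The central step is to argue that every window $R'_j$ itself lies in $\Omega_{n-1}^{\tau_*}$: this would place $\bar{A}_{n-1} R'_j$ in $\mathcal{R}_n^{\tau_*} \subseteq \mathcal{T}_n^{\tau_*}$, whence $\|\bar{A}_{n-1} R'_j\| \le 1/\sqrt{s_1(H_n^{\tau_*})}$ since $1/\sqrt{s_1(H_n^{\tau_*})}$ is the largest semi-axis of the ellipsoid \eqref{eq:T}. This is immediate for the $\chi^2$ detector from the per-sample form of $\Theta_n^{\tau_{\chi^2}}$, and for CUSUM it follows because truncating earlier residuals can only decrease the running cumulative sum. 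For MEWMA and Dynamic Watermarking the shifted test statistic differs from the original by a term that decays with the forgetting factor or falls outside the sliding window, and this discrepancy can be absorbed by the per-residual bounds established in the proof of Lemma \ref{lm:bound2}. I expect this detector-specific verification to be the principal technical obstacle.

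Granting the claim, the triangle inequality and a geometric series give
\begin{equation*}
\bigl\|\delta^{(a)}_t - \bar{A}_{n-1} R'_0\bigr\| \le \sum_{j=1}^{k-1} \frac{\|A^n\|^j}{\sqrt{s_1(H_n^{\tau_*})}} + \|A^n\|^k \bigl\|\delta^{(a)}_{t-kn}\bigr\|,
\end{equation*}
the infinite tail of which equals $\|A^n\|/[(1-\|A^n\|)\sqrt{s_1(H_n^{\tau_*})}] = \epsilon$; this is finite because $A$ being Schur stable forces $\|A^n\| < 1$ for the $n$ at hand, as the formula for $\epsilon$ tacitly requires. The leftover $A^{kn}\delta^{(a)}_{t-kn}$ is bounded by $\|A^n\|^k$ times the uniform constant on $\Theta_{n-1}^{\tau_*}$ supplied by Lemma \ref{lm:bound2}, and so vanishes as $t\to\infty$. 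Finally, compactness of the ellipsoid $\mathcal{T}_n^{\tau_*}$ lets one extract a subsequence of times $t_i\to\infty$ along which $\bar{A}_{n-1} R'_0(t_i) \to p^* \in \mathcal{T}_n^{\tau_*}$; passing to the limit yields $\|\delta^{(a)} - p^*\| \le \epsilon$, so $\delta^{(a)} \in \{p^*\} \oplus \mathcal{B}_\epsilon \subseteq \mathcal{T}_n^{\tau_*} \oplus \mathcal{B}_\epsilon = \mathcal{E}_n^{\tau_*}$, which is the desired inclusion.
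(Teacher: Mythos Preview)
Your proposal follows essentially the same route as the paper: decompose $\delta^{(a)}_t$ via the recursion \eqref{eq:errdyn2} into length-$n$ blocks, place each block's image $\bar A_{n-1}R'_j$ inside $\mathcal T_n^{\tau_*}$ (hence inside $\mathcal B_\sigma$ with $\sigma=1/\sqrt{s_1(H_n^{\tau_*})}$), sum the geometric series in $\|A^n\|$ to obtain the dilation radius $\epsilon$, and control the leftover initial-condition term with Lemma~\ref{lm:bound2}. The paper finishes by contradiction (positing $\delta'\notin\mathcal E_n^{\tau_*}$ at distance $\epsilon_1>0$ and driving the remainder below $\epsilon_1$); your compactness/subsequence ending is an equivalent direct version of the same idea.

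The step you label the ``principal technical obstacle'' is precisely Lemma~\ref{lm:bound1} in the paper, and the paper handles it at the level of the semi-algebraic outer sets $\Theta_n^{\tau_*}$ rather than the no-alarm sets $\Omega_n^{\tau_*}$: one checks, detector by detector, that each quadratic constraint $Q_{(i,n)}^{\tau_*}$ defining $\Theta_{n-1}^{\tau_*}$ is implied by a shifted constraint $Q_{(i+h-n,N)}^{\tau_*}$ already present in $\Theta_{N-1}^{\tau_*}$. Routing through $\Theta$ is cleaner than your $\Omega$-based sketch, especially for Dynamic Watermarking, where $a_{\mathcal D}$ depends on the watermark sequence $e_{n-k'}$ in addition to $R$; passing to $\Theta_n^{\tau_{\mathcal D}}$ via Theorem~\ref{thm:SHREYAS_LABELED_THIS} removes the watermark and makes the window inclusion a purely residual statement, after which the containment $\{\bar A_{n-1}R:R\in\Theta_{n-1}^{\tau_*}\}\subseteq\mathcal T_n^{\tau_*}$ from Theorem~\ref{thm:sos} applies. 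Your $\chi^2$ and CUSUM sketches already match the paper's; for MEWMA and Dynamic Watermarking you should replace the ``absorb the discrepancy'' heuristic with this constraint-by-constraint comparison.
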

\remove{To prove this result we must first consider the lemma:
\begin{lm}
\label{lm:bound1}
Suppose $n,N,h\in\mathbb{N}$ such that $0<n
\leq h\leq N$,  $R=[r_N^T ... r_0^T]^T\in\Theta_N^{\tau_*}$ where $\Theta_N^{\tau_*}$ is defined based on the choice of detector.
Then $R^\prime=[r_{h}^T ... r_{h-n}^T]^T\in\Theta_n^{\tau_*}$.
\end{lm}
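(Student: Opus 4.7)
The plan is to verify the inclusion constraint-by-constraint, exploiting the fact that $R'$ is a block-aligned subvector of $R$. Let $P \in \mathbb{R}^{q(n+1) \times q(N+1)}$ be the projection matrix such that $R' = PR$. Then $R'^T Q_{(i',n)}^{\tau_*} R' = R^T (P^T Q_{(i',n)}^{\tau_*} P) R$, so it suffices to show that for each quadratic constraint defining $\Theta_n^{\tau_*}$ there exists an admissible index $i$ for which $P^T Q_{(i',n)}^{\tau_*} P$ coincides with (or is dominated by) $Q_{(i,N)}^{\tau_*}$, whereupon the constraint $R^T Q_{(i,N)}^{\tau_*} R < 1$ in $\Theta_N^{\tau_*}$ immediately yields the desired bound on $R'$.

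For the $\chi^2$, CUSUM, and Dynamic Watermarking detectors, I would carry out a direct blockwise comparison. In each of these three cases $Q_{(\cdot,n)}^{\tau_*}$ extracts a consecutive block of residuals of a fixed length (a single residual for $\chi^2$, the $j'+1$ most recent residuals for CUSUM, and a window of length $\ell$ for Dynamic Watermarking) with a weight that depends only on the block length. Under the re-indexing $i' \mapsto h-n+i'$, the \emph{same} consecutive block of the original stream $R$ is picked out by $Q_{(h-n+i',N)}^{\tau_*}$ with the same weight, so the identity $P^T Q_{(i',n)}^{\tau_*} P = Q_{(h-n+i',N)}^{\tau_*}$ follows by inspection. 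The hypothesis $n\leq h\leq N$ ensures that the shifted index lies in the admissible range $\{0,\dots,N\}$, and for Dynamic Watermarking the additional hypothesis $N > \ell+k'$ guarantees that $h-n+i' \geq \ell + k'$ whenever $i' \geq \ell+k'$.

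The main obstacle is the MEWMA detector, whose rank-one matrix $Q_{(i',n)}^{\tau_M}$ encodes the local exponentially-weighted sum $G'_{i'} = \sum_{j=0}^{i'}(1-\beta)^{i'-j}\beta \bar r_{h-n+j}$ built only from residuals inside the window, whereas the analogous MEWMA constraint in $\Theta_N^{\tau_M}$ controls the global sum $G_{h-n+i'} = G'_{i'} + (1-\beta)^{i'+1} G_{h-n-1}$, which also depends on residuals prior to time $h-n$. My plan for this case is to solve for $G'_{i'} = G_{h-n+i'} - (1-\beta)^{i'+1} G_{h-n-1}$, observe that both $\|G_{h-n+i'}\|^2$ and $\|G_{h-n-1}\|^2$ are bounded by $\beta\tau_M/(2-\beta)$ via the MEWMA constraints of $\Theta_N^{\tau_M}$ at times $h-n+i'$ and $h-n-1$ respectively, and then estimate $\|G'_{i'}\|^2$ from this decomposition. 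Recovering the tight bound $\beta\tau_M/(2-\beta)$ demanded by $\Theta_n^{\tau_M}$, rather than the looser constant $(1+(1-\beta)^{i'+1})^2 \beta\tau_M/(2-\beta)$ that a naive triangle-inequality argument produces, is the crux of the proof and will require exploiting finer structural properties of the MEWMA recursion; I expect this step to be the principal difficulty.
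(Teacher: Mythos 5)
Your treatment of the $\chi^2$, CUSUM, and Dynamic Watermarking cases is exactly the paper's argument: under the index shift $i'\mapsto i'+h-n$ the relevant quadratic forms coincide, $R'^{T}Q_{(i',n)}^{\tau_*}R' = R^{T}Q_{(i'+h-n,N)}^{\tau_*}R$, so each constraint defining $\Theta_n^{\tau_*}$ is literally one of the constraints defining $\Theta_N^{\tau_*}$ and membership transfers. That part is complete and correct.

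The MEWMA case, which you correctly flag as the crux, is a genuine gap in your proposal, and the ``finer structural properties'' you hope to exploit do not exist: the tight bound you are after is false. Your own decomposition $G'_{i'} = G_{h-n+i'} - (1-\beta)^{i'+1}G_{h-n-1}$ already shows why. Take $q=1$ and let $c=\sqrt{\beta\tau_M/(2-\beta)}$ be the bound on $|G_i|$ imposed by $\Theta_N^{\tau_M}$. Choose the pre-window residuals so that $G_{h-n-1}=-(c-\epsilon)$, choose $\bar r_{h-n}$ so that $G_{h-n}=c-\epsilon$ (hence $\beta\bar r_{h-n}=(2-\beta)(c-\epsilon)$), and set all remaining residuals to zero. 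Every constraint of $\Theta_N^{\tau_M}$ holds, yet the restarted average satisfies $G'_0=\beta\bar r_{h-n}=(2-\beta)(c-\epsilon)>c$ for small $\epsilon$, so $R'\notin\Theta_n^{\tau_M}$; the triangle-inequality constant $(1+(1-\beta)^{i'+1})^2$ that you dismiss as loose is in fact attained at $i'=0$. For what it is worth, the paper's own proof does not resolve this either: it asserts $R'^{T}Q_{(i,n)}^{\tau_M}R'\le R^{T}Q_{(i+h-n,N)}^{\tau_M}R$ with no justification, and that inequality fails for the same reason --- dropping the tail $(1-\beta)^{i+1}G_{h-n-1}$ of the exponentially weighted sum can increase its norm. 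The workable fix is to weaken the MEWMA conclusion to $R'\in\Theta_n^{\tau'_M}$ with $\tau'_M=(2-\beta)^2\tau_M$, which your triangle-inequality estimate already delivers, and to carry that dilation through the outer approximation in Theorem 4.
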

\begin{proof}(of Lemma \ref{lm:bound1})
To prove that $R^\prime$ is in $\Theta_n^{\tau_*}$ we show that each of the constraints associated with $\Theta_n^{\tau_*}$ are included as a constraint associated with $\Theta_N^{\tau_*}$ or that there exists a more restrictive constraint in $\Theta_N^{\tau_*}$.
For the $\chi^2$ test we have the inclusion of all constraints since using \eqref{eq:theta1} and \eqref{eq:chi_con_set_quad2} we have:
\begin{align}
    R^{\prime T}Q_{(i,n)}^{\tau_{\chi^2}}R^\prime=R^TQ_{(i+h-n,N)}^{\tau_{\chi^2}}R<1~\forall i=0,...,n.
\end{align}
Similarly for the CUSUM detector we have that using \eqref{eq:theta2} and \eqref{eq:CUSUM_con_set_quad2} we have:
\begin{align}
R^{\prime T}Q_{(i,j,n)}^{\tau_{C}}R^\prime&=R^TQ_{(i+h-n,j,N)}^{\tau_C}R<1\nonumber\\
&\hspace*{1cm}\forall~i=0,...,n \text{ and } j=0,...,i.
\end{align}
For the MEWMA we have that $\Theta_N^{\tau_*}$ has more restrictive constraints since using \eqref{eq:theta3} and \eqref{eq:MEWMA_con_set_quad2} we have:
\begin{align}
    R^{\prime T}Q_{(i,n)}^{\tau_{M}}R^\prime\leq R^TQ_{(i+h-n,N)}^{\tau_{M}}R<1~\forall i=0,...,n.
\end{align}
For The Dynamic Watermarking Detector we have the inclusion of all constraints since for \eqref{eq:theta4} and \eqref{eq:DW_con_set_quad2} we have:
\begin{align}
 R^{\prime T}Q_{(i,n)}^{\tau_\mathcal{D}}R^\prime=R^TQ_{(i+h-n,N)}^{\tau_\mathcal{D}}R<1~\forall i=\ell+k^\prime,...,n.
\end{align}

\end{proof}

Now we return to proving Theorem \ref{thm:dil}.

\begin{proof}(Theorem \ref{thm:dil})
Let $\delta^\prime\in \mathcal{R}^{\tau_*}$, and assume that $\delta^\prime\notin\mathcal{E}_n^{\tau_*}$. Furthermore let:
\begin{align}
    \epsilon_1=\inf\{\|\delta-\delta^\prime\|~|~\delta\in\mathcal{E}_n^{\tau_*}\}.
\end{align}
Now consider that, for a given $N>n$:
\begin{align}
    \mathcal{R}_N^{\tau_*}\subseteq\{\delta~|~\delta=A_{N-1}X,~X\in\Theta_{N-1}^{\tau_*}\}.
\end{align}
Using Minkowski sums we over-approximate this set further as:
\begin{align}
    \mathcal{R}_N^{\tau_*}&\subseteq\{\delta=[\bar{A}_{n-1}~0_{p\times p(N-n)}]R_{N-1}~|~R_{N-1}\in\Theta_{N-1}^{\tau_*}\}\oplus\nonumber\\
   &\oplus\Bigg(\bigoplus_{i=1}^{j}A^{ni}\{\delta=[0_{p\times pi}~\bar{A}_{n-1}~0_{p\times p(N-n-i)}]R_{N-1}~|\nonumber\\
    &\hspace*{4.6cm}R_{N-1}\in\Theta_{N-1}^{\tau_*}\}\Bigg)\oplus\nonumber\\
    &\oplus A^{n(j+1)}\{\delta=[0_{p\times p(N-nj)}~\bar{A}_{h}]R_{N-1}~|\nonumber\\
    &\hspace*{4.6cm}R_{N-1}\in\Theta_{N-1}^{\tau_*}\}.
\end{align}
where $N$ is evenly divisible by $n$, $j+1$ times and $h$ is the remainder. Applying Lemma \ref{lm:bound2} and \ref{lm:bound1}, we have:
\begin{align}
    \mathcal{R}_N^{\tau_*}&\subseteq\{\delta=\bar{A}_{n-1}R_{n-1}~|~R_{n-1}\in\Theta_{n-1}^{\tau_*}\}\oplus\nonumber\\
    &\oplus\left(\bigoplus_{i=1}^{j}A^{ni}\{\delta=\bar{A}_{n-1}R_{n-1}~|~R_{n-1}\in\Theta_{n-1}^{\tau_*}\}\right)\oplus\nonumber\\
    &\oplus \mathcal{B}_{\eta\|A^{n(j+1)}\|}
\end{align}
where $\eta$ is the maximum radius when applying Lemma \ref{lm:bound2} for $h=0,...,n$.
Let $\sigma=\frac{1}{\sqrt{s_1(H_n^{\tau_*})}}$ then:
\begin{align}
    \{\delta=\bar{A}_{n-1}R_{n-1}~|~R_{n-1}\in\Theta_{n-1}^{\tau_*}\}&\subset\mathcal{T}_i^{\tau_*}\\
    =\{\delta~|~\delta^TH_n^{\tau_*}\delta\leq 1\}&\subset \mathcal{B}_\sigma.
\end{align}
This means that:
\begin{align}
    \mathcal{R}_N^{\tau_*}\subseteq\mathcal{T}_n^{\tau_*}\oplus\left(\bigoplus_{i=1}^{j}\mathcal{B}_{\sigma\|A^{ni}\|}\right)\oplus \mathcal{B}_{\eta\|A^{n(j+1)}\|}.
\end{align}
Since the Minkowski sum of balls is a ball with its radius as the sum of the radii, we can increase the outer approximation by allowing the summation to extend towards infinity:
\begin{align}
    \mathcal{R}_N^{\tau_*}\subseteq\mathcal{T}_n^{\tau_*}\oplus \mathcal{B}_\epsilon\oplus \mathcal{B}_{\eta\|A^{n(j+1)}\|},
\end{align}
where:
\begin{align}
    \epsilon=\frac{\sigma\|A^n\|}{(1-\|A^n\|)}\geq\sum_{i=1}^\infty\sigma\|A^{ni}\|.
\end{align}
Since $j+1>\frac{N}{n}$, there exists an $N_2$ such that for $N>N_2$ we have that $\eta\|A^{n(j+1)}\|<\epsilon_1$ which contradicts $\delta\in\mathcal{R}^{\tau_*}$.

\end{proof}}


\subsection{False Alarm Rate}
\label{sec:false}

While decreasing the threshold for a detector decreases the attack capability, it increases the $R_{FA}$. 
As described in the introduction, to compute this false alarm rate, it is typically assumed that the residuals are independent \cite{Murguia2016CUSUMSensors,Umsonst2018}.
Unfortunately, simulated experiments have noted that this assumption results in a consistent error between the expected and simulated results \cite{Murguia2016CUSUMSensors}. 
This is because the residuals are not independent, which can be confirmed by computing the auto correlation of the sequence of residuals when no attack is present:
\begin{align}
\mathds{E}[r_nr_{n-1}^T]&=\mathds{E}[(C\delta_n-z_n)(C\delta_{n-1}-z_{n-1})^T].
\end{align}
By expanding the product and removing uncorrelated terms one can show that:
\begin{align}
\mathds{E}[r_nr_{n-1}^T]=\mathds{E}[C\delta_n\delta_{n-1}^TC^T]-\mathds{E}[C\delta_nz_{n-1}^T].\label{eq:corr2}
\end{align}
By applying \eqref{eq:observe_error} to \eqref{eq:corr2} and once again canceling uncorrelated terms one can show that:
\begin{align}
\mathds{E}[r_nr_{n-1}^T]=C(A+LC)\mathds{E}[\delta_n\delta_n^T] C^T+L\Sigma_z.
\end{align}
In fact, correlation affects the rate of false alarms \cite{Harris1991StatisticalObservations}. 
To compute the threshold with a specified false alarm rate, we first fix a specific threshold for each detector and simulate the behavior of the system. 
By simulating the system for a long enough time, we can estimate the rate of false alarms associated with the fixed threshold. 
By repeating this approach for a range of thresholds, we can then build a lookup table that associates different thresholds with different false alarm rates.
By linearly interpolating between these thresholds, we can select a threshold that achieves a user-specified false alarm rate.

\section{Simulation and Real-World Comparison}
\label{sec:compare}
This section describes a simulated comparison in which the attack capability for a range of false alarm rates is approximated for each detector, and a real-world comparison in which the ability of each detection algorithm to detect particular attacks is explored.

\subsection{Simulation-Based Comparison of Attack Capability}
\label{sec:simp}
To illustrate the trade-off between the rate of false alarms and attack capability, we provide a comparison of each of the detection algorithms using a 2 dimensional model from \cite{Murguia2018OnAttacks}:
\begin{align*}
    A&=\begin{bmatrix}0.84&0.23\\-0.47&0.12 \end{bmatrix}& 
    ~B&=\begin{bmatrix}0.07& -0.32\\0.23&0.58\end{bmatrix}\\
    C&=\begin{bmatrix}1&0\\2&1\end{bmatrix}&
    ~K&=\begin{bmatrix}1.404&-1.042\\1.842&1.008\end{bmatrix}\\
    L&=\begin{bmatrix}0.0276&0.0448\\-0.01998&-0.0290\end{bmatrix}&
    ~\Sigma_z&=\begin{bmatrix}2&0\\0&2\end{bmatrix}\\
    \Sigma_w&=\begin{bmatrix}0.035&-0.011\\-0.011&0.02\end{bmatrix}&
    ~\Sigma_r&=\begin{bmatrix}2.086&0.134\\0.134&2.230\end{bmatrix}
\end{align*}
with the addition of a watermark with covariance $\Sigma_e=10^{-2}I$. Thresholds for the false alarm rates between $0.01$ and $0.3$ were found by running the simulation under no attack for $10^6$ time steps. 
Using these values, the reachable sets at time step $n=12$ were outer approximated using the optimization program stated in Theorem \ref{thm:sos} and dilated as stated in Theorem \ref{thm:dil} to provide outer approximations of the steady state reachable sets.
The resulting approximations for the $V_{RS}$, defined in \eqref{eq:VRS}, are plotted against the false alarm rate in Figure \ref{fig:reachcomp} for each of the detectors using various detector specific parameter selections. 

\begin{figure}[t]
    \centering
    \includegraphics[trim={2.0in 3.6in 2.2in 3.8in},clip,width=0.44\textwidth]{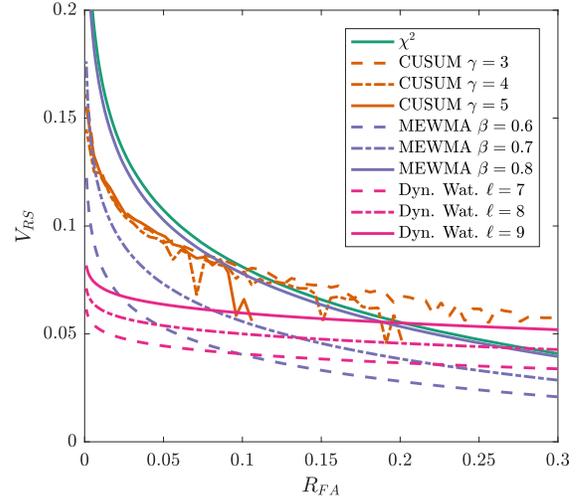}
    \caption{Approximate reachable set volume for varying false alarm rates for the example system in section \ref{sec:simp}}
    \label{fig:reachcomp}
    \vspace*{-0.5cm}
\end{figure}

One may note that while these methods provide smooth curves for the $\chi^2$, MEWMA, and Dynamic Watermarking  detectors, the curves for the CUSUM detector appears discontinuous, and do not span the entire range of $R_{FA}$ values.
The apparent discontinuity is attributed to the fact that, for the $\chi^2$, MEWMA, and Dynamic Watermarking detectors, increasing the threshold $\tau_*$ results in a proportional scaling of the outer approximation of $\mathcal{R}_n^{\tau_*}$. 
However, For the CUSUM detector, changing the threshold does not have this affect. In fact, changing the threshold for the CUSUM detector alters the shape of $\Theta_n^{\tau_C}$, resulting in  the outer approximation of $\mathcal{R}_n^{\tau_C}$ being less conservative for certain threshold values. Furthermore, the shortened span of the curves for the CUSUM detector are a result of certain $R_{FA}$ values being un-achievable for a given forgetting factor.

To determine whether the outer approximation is tight for the $\chi^2$, CUSUM, and MEWMA detectors, simulations were run for 60 $R_{FA}$ values uniformly spaced between 0.01 and 0.3. 
In these simulations, the portion of the observer related to the residual was propagated forward for $10^5$ steps using the dynamics \eqref{eq:errdyn2}. 
The residuals were sampled from a normal distribution with 0 mean and covariance $5I$ and scaled if necessary, to avoid alarms. 
The area of the convex hull of the observer error for the entire simulation was calculated. 
The difference between the over approximated area and the simulated area ranged from $0.0156-0.1408$ for the $\chi^2$, $0.0032-0.1143$ for the CUSUM, and $0.0075-0.1301$ for the MEWMA. 
The results indicate that the attack capability under the Dynamic Watermarking detector is comparable to the classic anomaly detectors as a function of false alarm rate.

\subsection{Real World Implementation}
\label{sec:exp}
\remove{\begin{table*}[t]
    \centering
    \begin{tabular}{|c|c|c|c|c|c|}
     \hline
     \multirow{2}{*}{$R_{FA}$} & \multirow{2}{*}{$\chi^2$} & CUSUM & MEWMA & Dyn. Wat. \\
     & & $\gamma=(15~/~17~/~19)$ & $\beta=(0.6~/~0.7~/~0.8)$ &$\ell=(20~/~25~/~30)$ \\\hline
      \rowcolor{gray!30} 0.05 & 12.08 & ( 51.04 / 12.27 / 2.21 ) & ( 13.09 / 15.00 / 16.85 ) & ( 99.570 / 103.74 / 105.59 ) \\\hline
     \rowcolor{gray!15}0.03 & 14.29 & ( 655.81 / 384.73 / 158.68 ) & ( 15.05 / 17.59 / 20.17 ) & ( 103.05 / 106.73 / 108.58 ) \\\hline
     0.01 & 21.38 & ( 1535.94 / 1445.58 / 1355.22 ) & ( 20.68 / 24.85 / 28.88 ) & ( 108.58 / 110.79 / 113.94 ) \\\hline
\end{tabular}
    \caption[caption]{Experimentally Found Thresholds for Various False Alarm Rates and\\  Detector Specific Parameters for the Real World Implementation in Section \ref{sec:exp}}
    \label{tab:thresh_exp}
    \vspace*{-2em}
\end{table*}}
In this section, we evaluate the ability of each of the anomaly detection schemes to detect attacks using a Segway Robotics Mobility Platform performing a path-following task.
In addition, we illustrate that adding a watermark to the system leads to an imperceptible reduction 
in performance, while significantly improving the detectability of an attack that was missed by classic anomaly detectors. 
Localization was provided by Google Cartographer \cite{Hess2016} using planar lidar and wheel odometry measurements. 
For the purpose of control, a LTV model was fit to the observed data yielding:
\begin{align}
\label{eq:error}
\begin{bmatrix}
e_{\ell,n+1}\\
e_{s,n+1}\\
e_{\theta,n+1}\\
e_{v,n+1}\\
e_{\dot{\theta},n+1}
\end{bmatrix}
=\begin{bmatrix}
e_{\ell,n}+(0.0478\tilde{v}_n)e_{\theta,n}-(0.045\dot{\tilde{\theta}}_n)e_{s,n}\\
e_{s,n}+(0.0478)e_{v,n}+(0.045\dot{\tilde{\theta}}_n)e_{\ell,n}\\
e_{\theta,n}+0.045e_{\dot{\theta},n}\\
e_{v,n}-0.1e_{v,n-4}+0.1u_{v,n}\\
0.6e_{\dot{\theta},n}+0.15e_{\dot{\theta},n-4}+0.24u_{\dot{\theta},n}
\end{bmatrix}
\end{align}
where the state is represented in trajectory error coordinates for a given nominal trajectory where $e_{\ell,n},e_{s,n}, \text{ and } e_{\theta,n}$ are the lateral, longitudinal, and heading error, $e_{v,n},e_{\dot{\theta},n}$ are the error in the velocity and angular velocity, $\tilde{v}_n,\dot{\tilde{\theta}}_n$ are the nominal velocity and angular velocity and $u_{v,n},u_{\dot{\theta},n}$ are the deviation from the nominal inputs. 

For a constant nominal velocity of $0.6$ m/s  and angular velocity of $0$ rad/s, this model can be represented as an LTI model with state vector:
\begin{align}
    x&=\left[\begin{matrix}e_{\ell,n} & e_{s,n} & e_{\theta,n} & e_{v,n} & e_{\dot{\theta},n} & e_{v,n-1}  \end{matrix}\right.\nonumber\\
    &\hspace{1cm}\left.\begin{matrix} e_{v,n-2} & e_{v,n-3} & e_{\dot{\theta},n-1} & e_{\dot{\theta},n-2} & e_{\dot{\theta},n-3}\end{matrix}\right]^T.
\end{align}
For the sake of brevity, the A and B matrices are not stated explicitly but can be found by expanding \eqref{eq:error}. 
The feedback gain matrix $K$  was found to make the closed loop system Schur Stable\add{.}\remove{ and is approximately:
\begin{align}
    K=\begin{bmatrix} 
    0 & -1.639 \\ 0 & -1.984 \\ -0.313 & 0\\ -0.212 & 0 \\ 0 & -0.384 \\ 0.019 & 0\\ 0.020 & 0 \\ 0.021 & 0 \\0.022 & 0 \\0 & -0.039 \\ 0 & -0.043 \\ 0 & -0.052 \\ 0 & -0.065
    \end{bmatrix}
\end{align}}
Similarly the observer gain matrix $L$ was found to make the observer Schur Stable\add{. Both the $K$ and $L$ matrices can be found in the technical report \cite{Porter2018}.}\remove{ and is approximately:
\begin{align}
    L=\begin{bmatrix}
    -0.791	& -0.016 & 0 & 0 & 0 \\
-0.002 & -0.501 & 0 & 0 & -0.022 \\
0 & 0 & -0.272 & -0.025 & 0 \\
0 & 0 & -0.011 & -0.252 & 0 \\
0 & -0.003 & 0 & 0 & -0.240 \\
0 & 0 & -0.013 & -0.258 & 0 \\
0 & 0 & -0.015 & -0.187 & 0 \\
0 & 0 & -0.015 & -0.133	& 0 \\
0 & 0 & -0.014 & -0.091 & 0 \\
0 & -0.004 & 0 & 0 & -0.395 \\
0 & -0.010 & 0 & 0 & -0.145 \\
0 & -0.008 & 0 & 0 & -0.054 \\
0 & -0.005 & 0 & 0 & -0.024 \\
    \end{bmatrix}
\end{align}}
The steady state covariance of the residuals, $\Sigma_r$, was approximated using the sample covariance from data generated by the Segway following a straight line down a 16 m hallway 40 times. 
To avoid the effects of the transient behavior at the start of each run, the beginning of each run was ignored. 
This experiment was repeated a second time after the introduction of a watermark into the control input with covariance:
\begin{align}
    \Sigma_e=\begin{bmatrix}0.02 & 0\\0&0.03\end{bmatrix},
\end{align}
in order to approximate $\Sigma_\psi$.
The average location error was 0.0262 m for the non-watermarked runs and 0.0506 m for the watermarked runs. 
While adding the watermark increased the location error, the average error did not hinder overall performance during the lane-following task.

Threshold values for the false alarm rates of 0.01, 0.03, and 0.05 were approximated for the $\chi^2$, CUSUM, and MEWMA detectors using the residuals from the un-watermarked runs. 
Thresholds for the Dynamic Watermarking detector and the same false alarm rates were found using the residuals from the watermarked runs. 
The resulting threshold values are displayed in \add{the technical report \cite{Porter2018}.}\remove{ Table \ref{tab:thresh_exp}.} 

Two attacks, following differing models, were then applied. 
Attack model 1 assumes that the attacker adds random noise to the system such that $v_n\sim\mathcal{N}(0,10^{-5} I)$.
Attack model 2 takes the form:
\begin{align}
    v_n&=-(Cx_n+z_n) +C\xi_n+\zeta_n.\label{eq:attack}
\end{align}
For this model, the attack measurement noise $\zeta_n$ is added to the false state such that $\zeta_n\sim\mathcal{N}(0,\Sigma_\zeta)$ and the false state $\xi_n\in\mathbb{R}^p$ is updated according to the closed loop dynamics of the system:
\begin{align}
    \xi_{n+1}&=(A+BK)\xi_n+\omega_n
\end{align}
with attack process noise $\omega_n\sim\mathcal{N}(0,\Sigma_\omega)$. 
The attack process and measurement noise were chosen to leave the distribution of the residuals unchanged. 

For each attack, 10 experimental runs were completed without a watermark and 10 with a watermark for a total of 40 experimental runs.  
The runs with a watermark were used in evaluating the Dynamic Watermarking detector, while all other detectors used the un-watermarked data. 
The resulting detection rates, defined as the number of alarms divided by the total number of time steps in the attacked runs, are displayed in Table \ref{tab:exp_exam}.

\begin{table}[ht]
    \centering
    \begin{tabular}{|c|c|c|c|}
     \hline
     \multirow{2}{*}{Method} & \multirow{2}{*}{$R_{FA}$} & Attack Model 1 & Attack Model 2\\
     & & Detection Rates & Detection Rates\\\hline
     \multirow{3}{*}{$\chi^2$} & 0.05 \cellcolor{gray!30}&  0.69 \cellcolor{gray!30}& 0.03 \cellcolor{gray!30}\\\cline{2-4}
      & 0.03 \cellcolor{gray!15}&  0.62 \cellcolor{gray!15}& 0.01 \cellcolor{gray!15}\\\cline{2-4} 
     & 0.01 &  0.47 & 0.00  \\\hline
     \multirow{3}{*}{$\underset{\gamma=(15/17/19)}{\text{CUSUM}}$} &\cellcolor{gray!30}0.05 &\cellcolor{gray!30}( 0.98 / 0.99 / 0.99 ) &\cellcolor{gray!30}( 0.00 / 0.00 / 0.00 )\\\cline{2-4}
     &\cellcolor{gray!15}0.03 &\cellcolor{gray!15}( 0.81 / 0.86 / 0.92 )  &\cellcolor{gray!15}( 0.00 / 0.00 / 0.00 ) \\\cline{2-4} 
     & 0.01 & ( 0.58 / 0.54 / 0.51 ) & ( 0.00 / 0.00 / 0.00 )\\\hline
     \multirow{3}{*}{$\underset{\beta=(0.6/0.7/0.8)}{\text{MEWMA}}$} &\cellcolor{gray!30}0.05 &\cellcolor{gray!30}( 0.51 / 0.57 / 0.62 ) &\cellcolor{gray!30}( 0.03 / 0.03 / 0.03 ) \\\cline{2-4}
     &\cellcolor{gray!15}0.03 &\cellcolor{gray!15}( 0.45 / 0.50 / 0.54 ) &\cellcolor{gray!15}( 0.01 / 0.01 / 0.01 )   \\\cline{2-4} 
     & 0.01 & ( 0.32 / 0.35 / 0.39 ) & ( 0.00 / 0.00 / 0.00 ) \\\hline
     \multirow{3}{*}{$\underset{\ell=(20/25/30)}{\text{Dyn. Wat.}}$} &\cellcolor{gray!30}0.05 &\cellcolor{gray!30}( \textbf{1.00} / \textbf{1.00} / \textbf{1.00} ) &\cellcolor{gray!30}( 0.98 / \textbf{1.00} / \textbf{1.00} )\\\cline{2-4}
     &\cellcolor{gray!15}0.03 &\cellcolor{gray!15}( \textbf{1.00} / \textbf{1.00} / \textbf{1.00} ) &\cellcolor{gray!15}( 0.97 / 0.99 / \textbf{1.00} )\\\cline{2-4} 
     & 0.01 & ( \textbf{1.00} / \textbf{1.00} / \textbf{1.00} ) & ( 0.95 / 0.98 / \textbf{1.00} )\\\hline
\end{tabular}
    \caption{Experimentally Found Alarm Rates For Various Detector Specific Parameters for the Attack Models from Section \ref{sec:exp}}
    \label{tab:exp_exam}
    \vspace*{-0.5cm}
\end{table}

For attack model 1, 
all of the detectors are able to reliably detect the attack, confirming that the implementation of the detectors is correct.
For attack model 2 the detection rate decrease from the $R_{FA}$ for the $\chi^2$, CUSUM, and MEWMA detectors. 
This may be due to the residuals for the un-attacked system not being distributed as a Gaussian distribution resulting in higher threshold values. 
Since attack model 2 replaces the feedback completely, the resulting residuals, when under attack, do follow a Gaussian distribution which then results in lower detection rates. 
The Dynamic Watermarking detector in the presence of the second attack provides a high detection rates for each set of parameters, and in some cases achieves a perfect detection rate. 

\section{Conclusion}
\label{sec:con}

This paper derives a method to evaluate the capability of an attacker without raising an alarm for $\chi^2$, CUSUM, MEWMA, and Dynamic Watermarking detectors.
Using this notion of attack capability, this paper illustrates that all considered detectors have comparable performance.
However, on a real-world system, Dynamic Watermarking is the only detector that is capable of detecting the presence of a certain class of attacks. 


\renewcommand{\bibfont}{\normalfont\small}
{\renewcommand{\markboth}[2]{}
\printbibliography}
\end{document}